\documentclass[a4paper,UKenglish]{lipics}

\bibliographystyle{plain}

\usepackage{amsmath}
\usepackage{amssymb}
\usepackage{enumerate}

\bibliographystyle{plain}

\newtheorem{krule}{Reduction Rule}
\newcommand{\avg}[1]{\gamma(#1)} 

\title{Directed Acyclic Subgraph Problem Parameterized above the Poljak-Turz\'{i}k Bound}
\author[1]{Robert Crowston}
\author[1]{Gregory Gutin}
\author[1]{Mark Jones}

\affil[1]{Royal Holloway, University of London\\
Egham TW20 0EX, UK\\ \texttt{{[robert,gutin,markj]}@cs.rhul.ac.uk}}

\Copyright[nc-nd]
          {Robert Crowston, Gregory Gutin, and Mark Jones}

\subjclass{F.2.2 Nonnumerical Algorithms and Problems}
%
\keywords{Acyclic Subgraph, Fixed-parameter tractable, Polynomial Kernel}

\serieslogo{}
\volumeinfo
  {Billy Editor, Bill Editors}
  {2}
  {Conference title on which this volume is based on}
  {1}
  {1}
  {1}
\EventShortName{}
\DOI{10.4230/LIPIcs.xxx.yyy.p}

\begin{document}

\maketitle

\begin{abstract}
An oriented graph is a directed graph without directed 2-cycles. Poljak and Turz\'{i}k (1986) proved that every connected oriented graph $G$ on $n$ vertices and $m$ arcs contains an acyclic subgraph with at least $\frac{m}{2}+\frac{n-1}{4}$ arcs. Raman and Saurabh (2006) gave another proof of this result and left it as an open question to establish the parameterized complexity of the following problem: does $G$ have an acyclic subgraph with least $\frac{m}{2}+\frac{n-1}{4}+k$ arcs, where $k$ is the parameter? We answer this question by showing that the problem can be solved by an algorithm of runtime $(12k)!n^{O(1)}$. Thus, the problem is fixed-parameter tractable. We also prove that there is a polynomial time algorithm that either establishes that the input instance of the problem is a Yes-instance or reduces the input instance to an equivalent one of size $O(k^2)$.
\end{abstract}

\section{Introduction}

The problem of finding the maximum acyclic subgraph in a directed graph\footnote{We use standard terminology and notation on directed graphs which almost always follows \cite{BJGut}. Some less standard and this-paper-specific digraph terminology and notation is provided in the end of this section.}
is well-studied in the literature in graph theory, algorithms and their applications  alongside its dual, the feedback arc set problem, see, e.g., Chapter 15 in \cite{BJGut} and references therein. This is true, in particular, in the area of parameterized algorithmics \cite{Chen+,GutKimSZeYeo,GutYeo,RamSau}.

Each directed graph $D$ with $m$ arcs has an acyclic subgraph with at least $m/2$ arcs. To obtain such a subgraph, order the vertices $x_1,\dots , x_n$ of $D$ arbitrarily and consider two spanning subgraphs of $D$: $D'$ with arcs of the form $x_ix_j$, and $D''$  with arcs of the form $x_jx_i$, where $i<j$. One of $D'$ and $D''$ has at least $m/2$ arcs. Moreover, $m/2$ is the largest size of an acyclic subgraph in every symmetric digraph $S$ (in a symmetric digraph the existence of an arc $xy$ implies the existence of an arc $yx$).
Thus, it makes sense to consider the parameterization\footnote{We use standard terminology on parameterized algorithmics, see, e.g., \cite{DowneyFellows99,FlumGrohe06,Niedermeier06}.} above the tight bound $m/2$: decide whether a digraph $D$ contains an acyclic subgraph with at least $m/2 + k$ arcs, where $k$ is the parameter. Mahajan {\em et al.} \cite{MahRamSik} and Raman and Saurabh \cite{RamSau} asked what the complexity of this problem is. For the case of oriented graphs (i.e., directed graphs with no directed cycles of length 2), Raman and Saurabh \cite{RamSau} proved that the problem is fixed-paramter tractable. A generalization of this problem to integer-arc-weighted digraphs (where $m/2$ is replaced by the half of the total weight of $D$) was proved to be fixed-parameter tractable in \cite{GutKimSZeYeo}.

For oriented graphs, $m/2$ is no longer a tight lower bound on the maximum size of an acyclic subgraph. Poljak and Turz\'{i}k \cite{PoljakTurzik86} proved the following tight bound on the maximum size of an acyclic subgraph of a connected oriented graph $D$: $\frac{m}{2} + \frac{n-1}{4}$. To see that the bound is indeed tight consider a directed path $x_1x_2\ldots x_{2t+1}$ and add to it
arcs $x_3x_1,x_5x_3,\ldots , x_{2t+1}x_{2t-1}$. This oriented graph $H_t$ consists of $t$ directed 3-cycles and has $2t+1$ vertices and $3t$ arcs. Thus,
$\frac{m}{2} + \frac{n-1}{4}=2t$ and $2t$ is the maximum size of an acyclic subgraph of $H_t$: we have to delete an arc from every directed 3-cycle as the cycles are arc-disjoint.

Raman and Saurabh \cite{RamSau} asked to determine the parameterized complexity of the following problem: decide whether a connected oriented graph $D$ has an acyclic subgraph with at least $\frac{m}{2} + \frac{n-1}{4} + k$ arcs, where $k$ is the parameter.
Answering this question, we will prove that this problem is fixed-parameter tractable and admits a kernel with $O(k^2)$ vertices and $O(k^2)$ arcs.

Observe that we may replace $k$ by $\frac{k}{4}$ to ensure that the parameter $k$ is always integral.
Therefore, the complexity of the Raman-Saurabh problem above is equivalent to that of the following parameterized problem.

\begin{center}
\fbox{~\begin{minipage}{0.9\textwidth}
{\sc Acyclic Subgraph above Poljak-Turz\'{i}k Bound (ASAPT)}\\ \nopagebreak
  \emph{Instance:} An oriented connected graph $G$ with $n$ vertices and $m$ arcs.\\
    \nopagebreak
  \emph{Parameter:} $k$.\\ \nopagebreak
  \emph{Question:} Does $G$ contain an acyclic subgraph with at least $\frac{m}{2} + \frac{n-1}{4} + \frac{k}{4}$ arcs?
\end{minipage}~}
\end{center}


Just a few years ago, as recorded by Mahajan {\em et al.} \cite{MahRamSik}, there were only very few sporadic results on problems parameterized above or below nontrivial tight bounds. 
By now the situation has changed quite dramatically: most of the open questions in \cite{MahRamSik} on parameterized complexity of problems parameterized above or below tight bounds have been solved. In the process of solving these problems, some methods and approaches have been developed. One such method is the use of lower bounds on the maximum value of a pseudo-boolean function.
The lower bounds are obtained using either a combination of probabilistic arguments and Fourier analysis inequalities \cite{AloGutKimSzeYeo11,GutIerMniYeo,GutinKimMnichYeo,GutKimSZeYeo} or a combination of linear algebraic, algorithmic and combinatorial results and approaches \cite{CroFelGutJonRosThoYeo}. Unfortunately, this method appears to be applicable mainly to constraint satisfaction problems rather than those on graphs and, thus, development of other methods applicable to problems on graphs parameterized above or below tight bounds, is of great interest.
Recently, such a method based on linear programming was investigated in \cite{CygPilPilWoj,NarRamRamSau}.

This paper continues development of another such method, which is a combination of structural graph-theoretical and algorithmic approaches, recently introduced in \cite{CroJonMni}; in fact, this paper demonstrates that the approach of  \cite{CroJonMni} 
for designing a fixed-parameter algorithm and producing 
a polynomial-size kernel for a problem on undirected graphs parameterized above tight bound 
can be modified to achieve the same for a problem on directed graphs.

In a nutshell, the method uses both two-way reduction rules (i.e., rules reducing an instance to an equivalent one) 
and one-way reduction rules (in such a rule if the reduced instance is a {\sc Yes}-instance, then the original instance is also a {\sc Yes}-instance)
to transform the input instance to a trivial graph. If the reduction rules do not allow us to conclude that the input instance is a {\sc Yes}-instance,  then 
the input instance has a relatively ``regular'' structure that can be used to solve the problem by a fixed-parameter dynamic programming algorithm. To establish the reduction rules and to show their ``completeness'', a structural result on undirected graphs is used, such as Lemma \ref{lem:slem} in this paper or Lemma 3 in \cite{CroJonMni}.
 

While the underlying approach in both papers is the same, the proofs used are different
due to the specifics of each problem. In particular, a different set of
reduction rules is used, and the ``regular'' structure derived in our paper
is rather different from that in  \cite{CroJonMni}. The dynamic programming
algorithm and kernel proof are also completely different, other than the
fact that in both papers the proofs are based on the ``regular'' structure
of the graph. Finally, note that whilst the kernel obtained in
\cite{CroJonMni} has $O(k^5)$ vertices, we obtain a kernel with just
$O(k^2)$ vertices and $O(k^2)$ arcs.



The paper is organized as follows. In the next section,we obtain two 
basic results on oriented graphs. Two-way and one-way reduction rules 
are introduced in Sections \ref{sec:2way} and \ref{sec:1way}, respectively. Fixed-parameter
tractability of ASAPT is proved in Section \ref{sec:fpt}. Section \ref{sec:polykern} is devoted to
proving the existence of a polynomial kernel. In Section \ref{sec:openproblems}, we briefly
mention another recent paper that showed that  ASAPT is
fixed-parameter tractable. We also discuss two open questions.

\smallskip

\noindent{\bf Some Digraph Terminology and Notation.} Let $D$ be a directed graph on $n$ vertices and $m$ arcs. For a vertex $x$ in $D$, the {\em out-degree} $d^+(x)$  is the number of arcs of $D$ leaving $x$ and the {\em in-degree} $d^-(x)$ is the number of arcs of $D$ entering $x$. For a subset $S$ of vertices of $D$, let $d^+(S)$ denote the number of arcs of $D$ leaving $S$ and $d^-(S)$ the number of arcs of $D$ entering $S$.
For subsets $A$ and $B$ of vertices of $D$, let $E(A,B)$ denote the set of arcs with exactly one endpoint in each of $A$ and $B$ (in both directions).
For a set $S$ of vertices, $D[S]$ is the subgraph of $D$ induced by $S.$ When $S=\{s_1,\ldots, s_p\}$, we will write $D[s_1,\ldots, s_p]$ instead of $D[\{s_1,\ldots, s_p\}].$ The {\em underlying graph} ${\rm UN}(D)$ of $D$ is the undirected graph obtained from $D$ by replacing all arcs by edges with the same end-vertices and getting rid of one edge in each pair of parallel edges. The {\em connected components} of $D$ are connected components of
${\rm UN}(D)$; $D$ is {\em connected} if ${\rm UN}(D)$ is connected.
Vertices $x$ and $y$ of $D$ are {\em neighbors} if there is an arc between them.
The maximum number of arcs in an acyclic subgraph of $D$ will be denoted by ${\rm a}(D)$. Let $\avg{D}=\frac{m}{2}+\frac{n-c}{4}$,
where $c$ is the number of connected components of $D$. By the Poljak-Turz\'{i}k bound, we have
\begin{equation}\label{gammab}
{\rm a}(G)\ge \avg{G}
\end{equation}
for every oriented graph $G$.
A {\em tournament} is an oriented graph obtained from a complete graph by orienting its edges arbitrarily. A {\em directed $p$-cycle} is a directed cycle with $p$ arcs.

\section{Basic Results on Oriented Graphs}

In our arguments we use the following simple correspondence between acyclic digraphs and orderings of vertices in digraphs. Let $H$ be an acyclic spanning subgraph of a digraph $D$. It is well-known \cite{BJGut} and easy to see that there is an ordering $x_1,\ldots ,x_n$ of vertices of $D$ such that if $x_ix_j$ is an arc of $H$ then $i<j$. On the other hand, any ordering $x_1,\ldots ,x_n$ of vertices of a digraph $D=(V,A)$ leads to an acyclic spanning subgraph of $D$: consider the subgraph induced by $\{x_ix_j:\ x_ix_j\in A, i<j\}.$ As we study maximum-size acyclic subgraphs, we may restrict ourselves to acyclic spanning subgraphs. Thus, we may use interchangeably the notions of acyclic spanning subgraphs and vertex orderings.

There are some known lower bounds on ${\rm a}(T)$ for tournaments $T$ on $n$ vertices, see, e.g., \cite{Spen} and references therein. 
We show the following useful bound which we were unable to find in the literature.

\begin{lemma} \label{lem:tournament}
 For a tournament $T$ on $n$ vertices with $m = \binom{n}{2}$ arcs,
  we can, in polynomial time, find an acyclic subgraph with at least
  $\frac{m}{2} + \frac{3n}{4}-1=\avg{T}+\frac{2n-3}{4}$ arcs, if $n$ is even,
  or $\frac{m}{2} + \frac{3(n-1)}{4}-1=\avg{T}+\frac{2n-6}{4}$ arcs, if $n$ is odd.
\end{lemma}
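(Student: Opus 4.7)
The plan is induction on $n$, where the odd case reduces to even $n-1$ and the even case reduces to even $n-2$. Base cases $n\in\{2,3\}$ are immediate (for $n=3$, any tournament has $a(T)\ge 2$). For odd $n\ge 5$, pick any vertex $v$ with $d^+(v)\ge (n-1)/2$ (the average out-degree), place it first, and apply the inductive even bound to $T-v$; the arithmetic closes exactly to $\frac{m}{2}+\frac{3(n-1)}{4}-1$.

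For even $n\ge 4$, I will choose two distinct vertices $u,v$ and use the ordering $(u,\,\pi_{T-\{u,v\}},\,v)$, where $\pi_{T-\{u,v\}}$ is an inductively good ordering of $T-\{u,v\}$. The forward arcs incident to $\{u,v\}$ total $d^+(u)+d^-(v)-[u\to v]$ (subtracting the single overlap arc $u\to v$ when present), so combined with the inductive bound $a(T-\{u,v\})\ge \frac{n(n-2)}{4}-1$, the target $\frac{m}{2}+\frac{3n}{4}-1=\frac{n(n+2)}{4}-1$ is reached provided I can find $u,v$ with either (a) $v\to u$ and $d^+(u)+d^-(v)\ge n$, or (b) $u\to v$ and $d^+(u)+d^-(v)\ge n+1$. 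The main obstacle is proving such a pair always exists; I resolve this by a case analysis on the degree sequence.

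If some vertex $u$ has $d^+(u)\ge n/2+1$, I pair it with any $v$ satisfying $d^-(v)\ge n/2$ (such a $v$ exists because in-degrees sum to $m$ with average $(n-1)/2$, forcing the maximum to be at least $n/2$ when $n$ is even); then $u\ne v$, since otherwise $d^+(u)+d^-(u)\ge n+1$ would contradict $d^+(u)+d^-(u)=n-1$, and the sum $d^+(u)+d^-(v)\ge n+1$ satisfies either (a) or (b) regardless of orientation. The symmetric case $\max d^-\ge n/2+1$ is analogous. Otherwise $\max d^+=\max d^-=n/2$, and a degree-counting argument using $\sum d^+=n(n-1)/2$ forces exactly $n/2$ vertices with $d^+=n/2,\ d^-=n/2-1$ (call this set $S$) and $n/2$ vertices with $d^+=n/2-1,\ d^-=n/2$ (set $S'$). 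Counting in-arcs to $S$ yields $|S|(n/2-1)=n(n-2)/4$ in total; of these, $\binom{n/2}{2}=n(n-2)/8$ are internal to $S$, leaving $e(S',S)=n(n-2)/8\ge 1$ for $n\ge 4$. Any arc $v\to u$ with $v\in S'$ and $u\in S$ then satisfies $d^+(u)+d^-(v)=n$, giving option (a). All steps — degree computation, pair search, and recursion — are polynomial-time.
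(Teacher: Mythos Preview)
Your proof is correct and follows essentially the same inductive scheme as the paper: reduce odd $n$ to even $n-1$ by peeling off one vertex, and handle even $n$ via a degree-threshold case split, removing two vertices in the balanced case and recursing on even $n-2$. The tactical choices differ slightly. In the high-degree even subcase the paper removes a \emph{single} vertex $x$ with $d^+(x)\ge n/2+1$ (or symmetrically for in-degree) and recurses on odd $n-1$, rather than finding a second vertex as you do. In the balanced subcase the paper picks two vertices $x,y$ both of out-degree $n/2$ with $x\to y$ and places \emph{both at the front} of the ordering, so the gained arcs are simply $d^+(x)+d^+(y)=n$ (since $x$ is not an out-neighbour of $y$); any arc inside $S$ suffices, which is immediate from $|S|=n/2\ge 2$. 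Your first--last placement $(u,\pi,v)$ works equally well but requires the extra edge-count to exhibit an arc from $S'$ to $S$. Both routes yield the same bound with the same running time; the paper's is marginally shorter.
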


\begin{proof}
We prove the lemma by induction. The claim can easily be checked for $n=1$ and $n=2$ and we may assume that $n\ge 3$.

Consider first the case when $n$ is even. 
Suppose that there exists a vertex $x$ such that $d^+(x) \ge \frac{n}{2}+1$. Consider the tournament $T'=T-x$, with $m'=m-(n-1)$ arcs and $n'=n-1$ vertices. By induction, there is an ordering on $T'$ that produces an acyclic spanning subgraph $H'$ of $T'$ such that $${\rm a}(H')\ge \frac{m'}{2} + \frac{3(n'-1)}{4}-1=
\frac{m-(n-1)}{2} + \frac{3(n-2)}{4}-1=\frac{m}{2} +\frac{3n}{4}-\frac{n}{2}-2.$$ Now add $x$ to the beginning of this ordering. This produces an acyclic spanning subgraph $H$ of $T$ such that ${\rm a}(H)\ge {\rm a}(H') +  \frac{n}{2}+1\ge \frac{m}{2} +\frac{3n}{4}-1$.

If there is a vertex $x$ such that $d^-(x) \ge \frac{n}{2}+1$, the same argument applies, but $x$ is added to the end of the ordering.

Otherwise, for every vertex $x$ of $T$, $d^+(x) \in \{ \frac{n}{2}-1, \frac{n}{2} \}$. Moreover, by considering the sum of out-degrees, exactly half the vertices have out-degree $\frac{n}{2}$. Hence, if $n\ge 4$, there are at least two vertices with out-degree $\frac{n}{2}$. Let $x$ and $y$ be two such vertices, and suppose, without loss of generality, that there is an arc from $x$ to $y$. Now consider $T'=T - \{x,y\}$ with $m'=m-(2n-3)$ edges and $n'=n-2$ vertices. By induction, there is an ordering on the vertices of $T'$ that produces an acyclic subgraph with at least $\frac{m'}{2} + \frac{3n'}{4}-1=\frac{m}{2} +\frac{3n}{4}-n-1$ arcs. Place $x$ and $y$ at the beginning of this ordering, with $x$ occurring before $y$. Then this will add all the arcs from $x$ and $y$ to the acyclic subgraph. Thus, ${\rm a}(T)\ge \frac{m}{2} +\frac{3n}{4}-n-1+n=\frac{m}{2} +\frac{3n}{4}-1$.

Now suppose that $n$ is odd. Let $x$ be any vertex in $T$, and let $T'=T - x$. By induction, there is an ordering on $T'$ that produces an acyclic subgraph with at least $\frac{m'}{2} + \frac{3n'}{4}-1$ arcs, where $n'=n-1$ is the number of vertices and $m'=m-(n-1)$ is the number of arcs in $T'$. By placing $x$ either at the beginning or end of this ordering, we may add at least $(n-1)/2$ arcs. Thus, ${\rm a}(T)\ge \frac{m-(n-1)}{2} + \frac{3(n-1)}{4} -1+ \frac{n-1}{2} = \frac{m}{2} + \frac{3(n-1)}{4}-1$.
\end{proof}

\begin{lemma}\label{lem:extend}
Let $S$ be a nonempty set of vertices of an oriented graph $G$ such that both $G-S$ and $G[S]$ are connected. If ${\rm a}(G-S)\ge  \avg{G-S}+\frac{k'}{4}$ and ${\rm a}(G[S])\ge \avg{G[S]}+\frac{k''}{4}$, then ${\rm a}(G)\ge \avg{G}+\frac{k'+k''-1}{4}+\frac{|d^+(S)-d^-(S)|}{2}$. In particular,  ${\rm a}(G)\ge \avg{G}+\frac{k'+k''-1}{4}$ if $|E(S,V(G)\setminus S)|$ is even and ${\rm a}(G)\ge \avg{G}+\frac{k'+k''+1}{4}$, if $|E(S,V(G)\setminus S)|$ is odd.
\end{lemma}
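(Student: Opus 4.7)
The plan is to combine good vertex orderings of $G-S$ and $G[S]$ into an ordering of the whole of $V(G)$. Using the correspondence between acyclic spanning subgraphs and vertex orderings recalled at the start of this section, the hypotheses furnish an ordering $\pi_1$ of $V(G)\setminus S$ and an ordering $\pi_2$ of $S$ whose forward arcs number at least $\avg{G-S}+k'/4$ and $\avg{G[S]}+k''/4$, respectively.

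First, I would form the two candidate concatenations of $V(G)$: the ordering $\pi_1\pi_2$, which places $V(G)\setminus S$ before $S$ and hence preserves, in addition to all forward arcs of $\pi_1$ and $\pi_2$, every arc from $V(G)\setminus S$ to $S$ (i.e.\ all $d^-(S)$ of them); and the ordering $\pi_2\pi_1$, which analogously preserves all $d^+(S)$ arcs from $S$ to $V(G)\setminus S$. Taking the better of the two yields
\[
{\rm a}(G)\ \ge\ {\rm a}(G-S)+{\rm a}(G[S])+\max\{d^+(S),d^-(S)\},
\]
and rewriting $\max\{d^+(S),d^-(S)\}=\tfrac{1}{2}|E(S,V(G)\setminus S)|+\tfrac{1}{2}|d^+(S)-d^-(S)|$ and then plugging in the two hypothesized lower bounds gives the main estimate.

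Next, I would compute $\avg{G}-\avg{G-S}-\avg{G[S]}$, carefully handling the component count. If $|E(S,V(G)\setminus S)|\ge 1$ then $G$ is connected, and a short calculation gives $\avg{G}-\avg{G-S}-\avg{G[S]}=\tfrac{1}{2}|E(S,V(G)\setminus S)|+\tfrac{1}{4}$; substituting into the previous inequality delivers exactly ${\rm a}(G)\ge\avg{G}+\tfrac{k'+k''-1}{4}+\tfrac{1}{2}|d^+(S)-d^-(S)|$. The edge case $|E(S,V(G)\setminus S)|=0$, in which $G$ has two components and $|d^+(S)-d^-(S)|=0$, is immediate from simply unioning the two subgraphs. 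Finally, for the ``in particular'' statements I would observe that $|d^+(S)-d^-(S)|$ has the same parity as $d^+(S)+d^-(S)=|E(S,V(G)\setminus S)|$, so if the latter is even then $\tfrac{1}{2}|d^+(S)-d^-(S)|$ is a nonnegative integer (hence at least $0$), while if it is odd this quantity is at least $\tfrac{1}{2}$, producing the shift from $\tfrac{k'+k''-1}{4}$ to $\tfrac{k'+k''+1}{4}$. There is no real obstacle here; the only care needed is tracking the component count of $G$ and the parity of $|E(S,V(G)\setminus S)|$.
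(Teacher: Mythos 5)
Your proof is correct and takes essentially the same approach as the paper's: the paper likewise combines the two acyclic subgraphs with all arcs crossing the cut in the majority direction (phrased as ``pick the arcs leaving $S$'' after assuming WLOG $d^+(S)\ge d^-(S)$, rather than as comparing the two concatenated orderings), and the arithmetic is identical. Your explicit handling of the parity argument for the ``in particular'' part and of the degenerate case $E(S,V(G)\setminus S)=\emptyset$ is, if anything, slightly more complete than the paper's, which leaves these implicit.
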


\begin{proof}
Form an acyclic subgraph on $G$ as follows. Assume without loss of generality that $d^+(S)\ge d^-(S)$. Pick the arcs leaving $S$ together with the arcs of the acyclic subgraphs in $G-S$ and $G[S]$. This forms an acyclic subgraph $H$.
Let $m=m'+m''+\bar{m}$ and $n=n'+n''$, where $G-S$ has $m'$ arcs and $n'$ vertices, $G[S]$ has $m''$ arcs and $n''$ vertices and  $\bar{m}=d^+(S)+d^-(S)$. The acyclic subgraph $H$ has at least $\avg{G-S}+\frac{k'}{4}+\avg{G[S]}+\frac{k''}{4}+\frac{\bar{m}}{2}+\frac{d^+(S)-d^-(S)}{2}=\frac{m'+m''+\bar{m}}{2} + \frac{n'-1}{4}+\frac{n''-1}{4} + \frac{k'}{4}+\frac{k''}{4}+\frac{d^+(S)-d^-(S)}{2}=\avg{G}+\frac{k'+k''-1}{4}+\frac{d^+(S)-d^-(S)}{2}$ arcs, as required.
%
 \end{proof}

\section{Two-way Reduction Rules}\label{sec:2way}

In the rest of this paper, $G$ stands for an arbitrary  connected oriented graph with $n$ vertices and $m$ arcs.
We initially apply two `two-way' reduction rules to $(G,k)$ to form a new instance $(G',k)$
such that $(G',k)$ is a {\sc Yes}-instance of {\sc ASAPT} if and only if $(G,k)$ is a {\sc Yes}-instance of {\sc ASAPT}
(i.e., the value of the parameter remains unchanged). We denote the number of vertices and arcs in $G'$ by $n'$ and $m'$, respectively.

\begin{krule}\label{rule:smallclique}
Let $x$ be a vertex and $S$ a set of two vertices such that $G[S]$ is a component of $G - x$ and $G[S\cup \{x\}]$ is a directed 3-cycle.
Then $G':=G-S.$
\end{krule}
\begin{lemma} \label{lem:smallclique}
If $(G',k)$ is an instance obtained from $(G,k)$ by an application of Rule \ref{rule:smallclique}, then $G'$ is connected, and $(G',k)$ is a {\sc Yes}-instance of {\sc ASAPT} if and only if $(G,k)$ is a {\sc Yes}-instance of {\sc ASAPT}.
\end{lemma}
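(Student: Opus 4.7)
The plan is first to establish connectedness, and then to show the two-sided equivalence by a direct counting argument, using the fact that $S$ is isolated from the rest of $G$ except through $x$.

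For connectedness of $G' = G - S$: since $G[S]$ is a component of $G - x$, every component of $G - x$ other than $G[S]$ is still present in $G - S$, and $x$ remains adjacent to each of these components because $G$ is connected. Hence $G' = G - S$ is connected.

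Next I would compute the change in the Poljak--Turz\'ik threshold. Writing $S = \{y,z\}$, we have $n' = n - 2$ and, since $G[S\cup\{x\}]$ is a directed $3$-cycle and $y,z$ have no neighbors outside $\{x,y,z\}$ (because $G[S]$ is a component of $G-x$), $m' = m - 3$. A short calculation then shows
\[
\left(\tfrac{m}{2}+\tfrac{n-1}{4}+\tfrac{k}{4}\right)-\left(\tfrac{m'}{2}+\tfrac{n'-1}{4}+\tfrac{k}{4}\right)=\tfrac{3}{2}+\tfrac{1}{2}=2,
\]
so the target in $G$ is exactly $2$ more than the target in $G'$.

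For the forward direction, let $H$ be an acyclic subgraph of $G$ of size at least $\tfrac{m}{2}+\tfrac{n-1}{4}+\tfrac{k}{4}$. Because the only arcs incident to $S$ are the three arcs of the directed $3$-cycle on $\{x,y,z\}$, and an acyclic subgraph of a directed $3$-cycle contains at most $2$ arcs, restricting $H$ to $G'$ removes at most $2$ arcs, giving an acyclic subgraph of $G'$ of the required size. For the backward direction, let $H'$ be an acyclic subgraph of $G'$ of size at least $\tfrac{m'}{2}+\tfrac{n'-1}{4}+\tfrac{k}{4}$, and fix a vertex ordering of $V(G')$ realizing it. Insert $y$ and $z$ consecutively into this ordering, adjacent to $x$, in whichever of the two positions (immediately before $x$, or immediately after $x$) and in whichever relative order recovers two of the three arcs of the $3$-cycle as forward arcs; a quick case check over the orientations $x{\to}y{\to}z{\to}x$ or $x{\to}z{\to}y{\to}x$ shows that such a placement always exists. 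The resulting ordering yields an acyclic subgraph of $G$ with at least $|H'|+2 \ge \tfrac{m}{2}+\tfrac{n-1}{4}+\tfrac{k}{4}$ arcs.

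No step is really hard here; the only thing to be careful about is to ensure that $y$ and $z$ do indeed have no neighbors outside $\{x,y,z\}$ (used to justify $m'=m-3$ and to ensure that restriction/extension introduces no other arcs), which follows immediately from the hypothesis that $G[S]$ is a component of $G-x$.
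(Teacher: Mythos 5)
Your proof is correct and follows essentially the same route as the paper's: the paper simply asserts $\mathrm{a}(G')=\mathrm{a}(G)-2$, $m'=m-3$, $n'=n-2$ and notes the thresholds differ by exactly $2$, while you additionally spell out why $\mathrm{a}(G')=\mathrm{a}(G)-2$ holds in both directions. No issues.
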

\begin{proof}
Any two components of $G'-x$ will be connected by $x$ and so $G'$ is connected.
Since ${\rm a}(G')={\rm a}(G)-2$, $m'=m-3$ and $n'=n-2$, we have ${\rm a}(G)\ge \frac{m}{2}+\frac{n-1}{4}+\frac{k}{4}$ if and only if
${\rm a}(G')\ge \frac{m'}{2}+\frac{n'-1}{4}+\frac{k}{4}.$
 \end{proof}

\begin{krule}\label{rule:bridgeTriangles}
 Let $a,b,c,d,e$ be five vertices in $G$ such that $G[a,b,c]$ and $G[c,d,e]$ are directed 3-cycles, $G[a,b,c,d,e]=G[a,b,c]\cup G[c,d,e]$ and $a,e$ are the only vertices in $\{a,b,c,d,e\}$ that are adjacent to a vertex in  $G
 -\{a,b,c,d,e\}$. To obtain $G'$ from $G$, delete $b,c$ and $d$, add a new vertex $x$ and three arcs such that $G[a,x,e]$ is a directed 3-cycle.
\end{krule}
\begin{lemma}\label{lem:bridgeTriangles}
 If $(G',k)$ is an instance obtained from $(G,k)$ by an application of Rule \ref{rule:bridgeTriangles}, then $G'$ is connected, and $(G',k)$ is a {\sc Yes}-instance of {\sc ASAPT} if and only if $(G,k)$ is a {\sc Yes}-instance of {\sc ASAPT}.
\end{lemma}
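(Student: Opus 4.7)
The plan is to show that $G'$ is connected and that ${\rm a}(G)={\rm a}(G')+2$. Together with $n'=n-2$ and $m'=m-3$ this gives $\frac{m'}{2}+\frac{n'-1}{4}=\frac{m}{2}+\frac{n-1}{4}-2$, so the ASAPT target drops by exactly $2$ and the two instances have the same answer. Connectivity of $G'$ is immediate: in $G$ only $a$ and $e$ from $\{a,b,c,d,e\}$ touch the outside, so any $G$-path entering $\{a,b,c,d,e\}$ enters and leaves through $a$ or $e$; in $G'$ the new directed $3$-cycle on $\{a,x,e\}$ keeps $a$, $x$, $e$ connected in the underlying graph, so any such path can be rerouted through $x$.

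For ${\rm a}(G)={\rm a}(G')+2$ I use the correspondence between acyclic spanning subgraphs and vertex orderings recalled at the start of Section~2. Because $b,c,d$ in $G$ and $x$ in $G'$ have no neighbours outside the respective sets, for any linear order $\pi$ of the common ground set $V(G)\setminus\{b,c,d\}=V(G')\setminus\{x\}$ the forward arcs lying outside $G[a,b,c,d,e]$ (respectively outside $G[a,x,e]$) coincide in the two graphs and do not depend on where $b,c,d$ (respectively $x$) are later inserted into $\pi$. Writing $M$ for the maximum number of such external forward arcs over all $\pi$, and $F_G$ (respectively $F_{G'}$) for the maximum, over insertion positions, of forward arcs inside $G[a,b,c,d,e]$ (respectively inside $G[a,x,e]$), we get ${\rm a}(G)=M+F_G$ and ${\rm a}(G')=M+F_{G'}$; it then suffices to prove $F_G=4$ and $F_{G'}=2$.

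Both $F_G\le 4$ and $F_{G'}\le 2$ are immediate, since each directed $3$-cycle must lose at least one arc in any linear order. For the matching lower bounds I rely on the elementary fact that a directed $3$-cycle contributes two forward arcs to a linear order exactly when its three vertices appear in the cyclic order matching its arc directions, and that, given fixed positions of two vertices of a $3$-cycle, one can always find a position for the third vertex realising this cyclic order. Applied to $G[a,x,e]$ this gives $F_{G'}=2$ at once. For $F_G$ the crucial point is that the shared vertex $c$ has no external neighbour, so its position in the final ordering is entirely free: pick any position for $c$, and then insert $b$ to match the orientation of $G[a,b,c]$ and $d$ to match the orientation of $G[c,d,e]$ independently; both insertions succeed by the previous fact, so each triangle attains its full two forward arcs and $F_G=4$. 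The step I expect to be the main obstacle is exactly this cyclic-order bookkeeping, where the compatibility of a single position of $c$ with the orientations of both $3$-cycles is what forces $c$ to be internal in the hypothesis of the rule.
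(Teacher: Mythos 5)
Your proposal is correct and follows the same route as the paper: both reduce the lemma to the identity ${\rm a}(G)={\rm a}(G')+2$ together with $m'=m-3$, $n'=n-2$. The paper simply asserts ${\rm a}(G')={\rm a}(G)-2$, whereas you supply the missing justification via the ordering correspondence (the external forward arcs coincide since only $a,e$ meet the outside, and the internal contribution is exactly $4$ for the two triangles sharing the free vertex $c$, versus $2$ for the single triangle on $a,x,e$); this is a genuine and correct filling-in of a detail the paper omits.
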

\begin{proof}
Clearly, $G'$ is connected. Note that ${\rm a}(G')={\rm a}(G)-2$, $m'=m-3$ and $n'=n-2$. Thus, we have ${\rm a}(G)\ge \frac{m}{2}+\frac{n-1}{4}+\frac{k}{4}$ if and only if ${\rm a}(G')\ge \frac{m'}{2}+\frac{n'-1}{4}+\frac{k}{4}.$
 \end{proof}

\section{One-way Reduction Rules}\label{sec:1way}
Recall that $G$ stands for an arbitrary connected oriented graph with $n$ vertices and $m$ arcs.
We will apply reduction rules transforming an instance $(G,k)$  of {\sc ASAPT} into a new instance $(G',k')$, where
$G'$ is an oriented graph with $n'$ vertices and $m'$ arcs, and $k'$ is the new value of the parameter.
We will see that for the reduction rules of this section the following property will hold: if
$(G',k')$ is a {\sc Yes}-instance then $(G,k)$ is a {\sc Yes}-instance, but not necessarily vice versa.
Thus, the rules of this section are called one-way reduction rules.

\begin{krule}\label{rule:degree}
Let $x$ be a vertex such that $G-x$ is connected, and $d^+(x) \neq d^-(x)$. To obtain $(G',k')$ remove $x$ from $G$ and reduce $k$ by $2|d^+(x)-d^-(x)|-1$.
\end{krule}

\begin{lemma}\label{lem:degree}
 If $(G',k')$ is an instance reduced from $(G,k)$ by an application of Rule \ref{rule:degree}, then $G'$ is connected, and if $(G',k')$ is a {\sc Yes}-instance then $(G,k)$ is a {\sc Yes}-instance.
\end{lemma}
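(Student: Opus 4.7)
The plan is to derive the lemma as a direct application of Lemma \ref{lem:extend} to the singleton set $S=\{x\}$, where $x$ is the vertex removed by Rule \ref{rule:degree}.

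First, the connectedness of $G'$ is immediate: the rule is only applicable when $G-x$ is connected by hypothesis, and $G'=G-x$. So that part of the statement requires no argument beyond quoting the precondition of the rule.

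For the second part, I would set $S=\{x\}$, so that $G[S]$ is a single vertex, which is trivially connected, has $\avg{G[S]}=0$ and ${\rm a}(G[S])=0$; we may therefore take $k''=0$ in Lemma \ref{lem:extend}. Also $G-S=G'$ is connected by the above. Assuming $(G',k')$ is a Yes-instance, we have ${\rm a}(G')\ge \avg{G'}+k'/4$, so we may take $k'$ as the ``$k'$'' of Lemma \ref{lem:extend}. The lemma then gives
\[
{\rm a}(G)\ge \avg{G}+\frac{k'-1}{4}+\frac{|d^+(x)-d^-(x)|}{2}.
\]
A short substitution using $k'=k-(2|d^+(x)-d^-(x)|-1)$ collapses the right-hand side to $\avg{G}+k/4$, which shows that $(G,k)$ is a Yes-instance. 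Since the hypothesis $d^+(x)\neq d^-(x)$ makes $|d^+(x)-d^-(x)|\ge 1$, the reduction in $k$ is at least $1$, so the rule does make strict progress (though this is not needed for the correctness statement itself).

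There is no real obstacle here; the lemma is essentially the contrapositive-style consequence of the extension bound, and the proof amounts to plugging $S=\{x\}$, $k''=0$ into Lemma \ref{lem:extend} and performing the routine arithmetic. The only care needed is in noting that the parity/absolute-value term $|d^+(x)-d^-(x)|/2$ in Lemma \ref{lem:extend} is exactly what is compensated by the $-(2|d^+(x)-d^-(x)|-1)$ adjustment to $k$ in the rule, which is how the rule is designed.
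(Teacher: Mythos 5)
Your proposal is correct and follows exactly the paper's own argument: apply Lemma \ref{lem:extend} with $S=\{x\}$ and $k''=0$, then check that the term $\frac{|d^+(x)-d^-(x)|}{2}$ cancels against the reduction $k'=k-(2|d^+(x)-d^-(x)|-1)$ to yield ${\rm a}(G)\ge\avg{G}+\frac{k}{4}$. Nothing to add.
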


\begin{proof}
Let $(G',k')$ be a {\sc Yes}-instance.
Then by Lemma \ref{lem:extend} with $S=\{x\}$ and $k''=0$, ${\rm a}(G)\ge \avg{G}+\frac{k'-1}{4}+\frac{|d^+(S)-d^-(S)|}{2}=\avg{G}+\frac{k}{4}$, as required.
 \end{proof}

%

\begin{krule}\label{rule:bigclique}
 Let $S$ be a set of vertices such that $G-S$ is connected, $G[S]$ is a tournament, and $|S|\ge 4$.
 To obtain $(G',k')$, remove $S$ from $G$ and reduce $k$ by $2|S|-4$ if $S$ is even, or $2|S|-7$ if $|S|$ is odd.
\end{krule}

\begin{lemma}\label{lem:bigclique}
 If $(G',k')$ is an instance obtained from $(G,k)$ by an application of Rule \ref{rule:bigclique}, then $G'$ is connected, and if $(G',k')$ is a {\sc Yes}-instance then $(G,k)$ is a {\sc Yes}-instance.
\end{lemma}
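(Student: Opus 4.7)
The plan is straightforward: combine the tournament lower bound from Lemma \ref{lem:tournament} with the ``glueing'' bound from Lemma \ref{lem:extend}, choosing $k''$ so that the arithmetic matches exactly the amount by which we reduced the parameter.

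First, since $G' = G-S$ by construction and the rule assumes $G-S$ is connected, connectivity of $G'$ is immediate. For the main implication, assume $(G',k')$ is a \textsc{Yes}-instance, which by definition means ${\rm a}(G') \ge \avg{G'} + \frac{k'}{4}$, i.e. ${\rm a}(G-S) \ge \avg{G-S} + \frac{k'}{4}$. Since $G[S]$ is a tournament on $|S|\ge 4$ vertices, Lemma \ref{lem:tournament} gives us a polynomial-time computable acyclic subgraph showing
\[
{\rm a}(G[S]) \ge \avg{G[S]} + \frac{k''}{4},
\]
where $k'' = 2|S|-3$ if $|S|$ is even and $k'' = 2|S|-6$ if $|S|$ is odd. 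Note that we need $G[S]$ connected in order to apply Lemma \ref{lem:extend}; but $G[S]$ is a tournament, hence certainly connected.

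Now apply Lemma \ref{lem:extend} to the partition $(V(G)\setminus S, S)$ with the parameters $k'$ and $k''$ just obtained. The lemma yields
\[
{\rm a}(G) \ge \avg{G} + \frac{k'+k''-1}{4} + \frac{|d^+(S)-d^-(S)|}{2} \ge \avg{G} + \frac{k'+k''-1}{4}.
\]
The final step is to verify that $k' + k'' - 1 \ge k$ in both parities. If $|S|$ is even, then $k' = k - (2|S|-4)$ and $k'' = 2|S|-3$, so $k'+k''-1 = k$. If $|S|$ is odd, then $k' = k - (2|S|-7)$ and $k'' = 2|S|-6$, so again $k'+k''-1 = k$. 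In either case ${\rm a}(G) \ge \avg{G} + \frac{k}{4}$, which is exactly what is required for $(G,k)$ to be a \textsc{Yes}-instance.

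There is no real obstacle here: the only nontrivial ingredient is Lemma \ref{lem:tournament}, whose constants $\tfrac{3n}{4}-1$ and $\tfrac{3(n-1)}{4}-1$ were designed precisely so that the $k''$ produced in this application cancels the $2|S|-4$ or $2|S|-7$ subtracted from $k$ by the rule, leaving the $-1$ loss from Lemma \ref{lem:extend} to be absorbed. In particular, the two parity cases in the statement of the rule come directly from the two parity cases in Lemma \ref{lem:tournament}, and the constants match up exactly.
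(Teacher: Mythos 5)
Your proof is correct and follows exactly the same route as the paper: apply Lemma \ref{lem:tournament} to the tournament $G[S]$ to get $k''=2|S|-3$ (even) or $2|S|-6$ (odd), then combine with Lemma \ref{lem:extend} and check that $k'+k''-1=k$ in both parity cases. The arithmetic and the connectivity observations all check out.
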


\begin{proof}
Suppose $|S|$ is even. By Lemma \ref{lem:tournament}, ${\rm a}(G[S])\ge \avg{G[S]}+\frac{2|S|-3}{4}$. By Lemma \ref{lem:extend}, if ${\rm a}(G')\ge \avg{G'}+(k-2|S|+4)/4$, then ${\rm a}(G)\ge \avg{G}+\frac{(k-2|S|+4)+(2|S|-3)-1}{4}=\avg{G}+\frac{k}{4}$, as required.

 A similar argument applies in the case when $|S|$ is odd, except the bound from Lemma \ref{lem:tournament} is $\avg{G[S]}+\frac{2|S|-6}{4}$, and so $k'=k-(2|S|-7)$ is applied.
 \end{proof}

\begin{krule}\label{rule:triplet}
Let $S$ be a set of three vertices such that the underlying graph of $G[S]$ is isomorphic to $P_3$, and $G - S$ is connected. To obtain $(G',k')$, remove $S$ from $G$ and reduce $k$ by $1$.
\end{krule}

\begin{lemma}\label{lem:triplet}
 If $(G',k')$ is an instance obtained from $(G,k)$ by an application of Rule \ref{rule:triplet}, then $G'$ is connected, and if $(G',k')$ is a {\sc Yes}-instance then $(G,k)$ is a {\sc Yes}-instance.
\end{lemma}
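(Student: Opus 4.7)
The plan is to mirror the proof of Lemma \ref{lem:bigclique} by applying the ``gluing'' Lemma \ref{lem:extend} with the three-vertex set $S$ from Rule \ref{rule:triplet}. Connectivity of $G'$ is immediate, since Rule \ref{rule:triplet} explicitly requires $G-S$ to be connected and $G' = G - S$.

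The bulk of the work is to supply Lemma \ref{lem:extend} with a suitable $k''$ for the piece $G[S]$. Since the underlying graph of $G[S]$ is $P_3$, the subgraph $G[S]$ is connected, has $n'' = 3$ vertices and $m'' = 2$ arcs, and is acyclic, because two arcs on three vertices of an oriented graph cannot form a directed cycle. Hence ${\rm a}(G[S]) = 2$, while a direct calculation of $\gamma(G[S]) = \frac{2}{2} + \frac{3-1}{4} = \frac{3}{2}$ gives ${\rm a}(G[S]) \ge \gamma(G[S]) + \frac{2}{4}$, so I take $k'' = 2$.

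Then, assuming $(G',k-1)$ is a Yes-instance, i.e.\ ${\rm a}(G') \ge \gamma(G') + (k-1)/4$, I plug into Lemma \ref{lem:extend} with $k' = k-1$ and $k'' = 2$ to get ${\rm a}(G) \ge \gamma(G) + \frac{(k-1)+2-1}{4} + \frac{|d^+(S)-d^-(S)|}{2} \ge \gamma(G) + \frac{k}{4}$, which is exactly the condition that $(G,k)$ is a Yes-instance. There is no real obstacle here; the one thing to flag is the computation of $\gamma(G[S])$ for the $P_3$-induced piece, after which everything is a straightforward substitution into Lemma \ref{lem:extend}, just as in the proofs of Lemmas \ref{lem:degree} and \ref{lem:bigclique}.
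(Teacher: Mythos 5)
Your proof is correct and follows the same route as the paper: both compute that ${\rm a}(G[S])=\gamma(G[S])+\frac{1}{2}$ for the induced $P_3$ piece (i.e.\ $k''=2$) and then apply Lemma \ref{lem:extend} with $k'=k-1$ to conclude ${\rm a}(G)\ge\gamma(G)+\frac{k}{4}$. Your version merely spells out the arithmetic behind ${\rm a}(G[S])=2$ and $\gamma(G[S])=\frac{3}{2}$, which the paper leaves as an observation.
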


\begin{proof}
Observe that ${\rm a}(G[S])=\avg{G[S]}+\frac{1}{2}$. Hence, by Lemma \ref{lem:extend}, if ${\rm a}(G')\ge \avg{G'}+(k-1)/4$, then ${\rm a}(G)\ge \avg{G}+k/4$.  \end{proof}

\section{Fixed-Parameter Tractability of {\sc ASAPT}}\label{sec:fpt}

The next lemma follows immediately from a nontrivial structural result of Crowston {\em et al.} (Lemma 3 in \cite{CroJonMni}).

\begin{lemma}\label{lem:slem}
Given any connected undirected graph $H$, at least one of the following properties holds:
\begin{description}
\item[A] There exist $v \in V(H)$ and $X \subseteq V(H)$ such that $X$ is a connected component of $H -  v$ and $X$ is a clique;
\item[B] There exist $a,b,c \in V(H)$ such that $H[\{a,b,c\}]$ is isomorphic to $P_3$ and $H -  \{a,b,c\}$ is connected;
\item[C] There exist $x,y \in V(H)$ such that $\{x,y\} \notin E(H)$, $H  - \{x,y\}$ is disconnected, and for all connected components $X$ of $H  - \{x,y\}$, except possibly one, $X \cup \{x\}$ and $X \cup \{y\}$ are cliques.
\end{description}
\end{lemma}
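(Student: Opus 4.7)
My plan is to prove the lemma by a case analysis based on the block-cut tree of $H$, ultimately invoking (or reproducing a fragment of) the structural result of Crowston, Jones and Mnich. The overall strategy is to try to produce property A or B from the ``easy'' structural features (cut vertices, end-blocks, small induced paths), and fall back on property C only when a restrictive 2-vertex separator structure is forced upon us.

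First I would handle the case in which $H$ contains a cut vertex $v$. Take an end-block $B$ of the block-cut tree incident with $v$, and let $X = V(B)\setminus\{v\}$. Then $X$ is a connected component of $H - v$. If $X$ induces a clique in $H$, property A holds immediately. Otherwise, $B$ is a $2$-connected non-clique, so it contains two non-adjacent vertices, and from this one can extract three vertices $a,b,c \in V(B)$ inducing a $P_3$ (choosing $b$ to be an interior vertex of a shortest path of length $2$ between a non-adjacent pair). A short argument using the fact that $B$ is $2$-connected and that $H$ has other blocks attached through $v$ shows that $H - \{a,b,c\}$ remains connected, giving property B. A little care is needed if $|X|\le 2$, but in that subcase either $X\cup\{v\}$ itself forms a triangle or path that directly gives property A (with a different choice of $v$) or property B.

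Second, I would handle the case where $H$ is $2$-connected. If every induced $P_3$ in $H$ separates the graph, then in particular every pair of adjacent edges disconnects $H$, a very rigid condition. The strategy is to pick a vertex $x$ and look at the structure of $H-x$: since $H$ is $2$-connected, $H-x$ is connected, so I must build the separator differently. I would choose a minimum $2$-vertex separator $\{x,y\}$ (which exists unless $H$ is $3$-connected, a case I would treat separately by exhibiting a non-separating induced $P_3$ via an ear decomposition). Among the components of $H - \{x,y\}$, either some component together with $\{x,y\}$ contains an induced $P_3$ whose removal does not disconnect $H$ (yielding property B, using the remaining components to preserve connectivity), or no such $P_3$ exists; in the latter case a combinatorial argument shows that each such component $X$ must be ``rigid'' in the precise sense that both $X\cup\{x\}$ and $X\cup\{y\}$ are cliques, yielding property C. The non-edge condition $\{x,y\}\notin E(H)$ follows because otherwise $\{x,y\}$ together with a neighbour already gives a non-separating $P_3$, contradicting the previous subcase.

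The main obstacle is the last step: showing that if $H$ is $2$-connected, neither A nor B applies, and $\{x,y\}$ is a $2$-separator, then \emph{all} but at most one component $X$ of $H-\{x,y\}$ satisfy the strong property that both $X\cup\{x\}$ and $X\cup\{y\}$ are cliques. This is essentially the content of Lemma~3 of~\cite{CroJonMni}, and it requires a careful iterative refinement of the separator: whenever a component $X$ fails the clique condition, one locates an induced $P_3$ inside $X \cup \{x,y\}$ whose deletion leaves $H$ connected (through the other components), contradicting the failure of property B. Rather than reproving this delicate argument, my proof proposal is to translate the hypothesis of Lemma~3 in \cite{CroJonMni} into the three alternatives A, B, C as stated, and then invoke that lemma to conclude.
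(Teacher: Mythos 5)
Your proposal ultimately does exactly what the paper does: the paper offers no independent proof and simply states that the lemma follows immediately from Lemma~3 of \cite{CroJonMni}, which is precisely the citation you fall back on at the end. The preliminary case analysis you sketch is not fully watertight (for instance, the claim that in the cut-vertex case a non-separating induced $P_3$ can always be extracted from a non-clique end-block is exactly the delicate point that Lemma~3 of \cite{CroJonMni} resolves), but since you explicitly defer to that cited lemma rather than relying on the sketch, your proof takes essentially the same route as the paper's.
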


\begin{lemma}\label{lem:4rules}
For any connected oriented graph $G$ with at least one edge, one of Rules \ref{rule:smallclique}, \ref{rule:degree}, \ref{rule:bigclique}, \ref{rule:triplet} applies.
\end{lemma}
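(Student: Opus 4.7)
My plan is to apply Lemma~\ref{lem:slem} to the connected underlying graph $H = {\rm UN}(G)$ and show that each of the three resulting properties implies applicability of one of the four rules. Property B supplies three vertices inducing a $P_3$ in $H$ whose removal leaves $H$ connected, which is exactly the hypothesis of Rule~\ref{rule:triplet}, so that case is immediate.

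For Property A, with $X$ a clique component of $H - v$, I would split on $|X|$. If $|X| \ge 4$, then $S = X$ induces a tournament in $G$ and $G - X$ is connected via $v$ (since every other component of $H - v$ attaches to $v$ in $H$), so Rule~\ref{rule:bigclique} applies. If $|X| = 1$, the sole vertex of $X$ has degree $1$, hence $d^+ \neq d^-$, and Rule~\ref{rule:degree} applies. For $|X| \in \{2, 3\}$ I would case-analyze on how $v$ attaches to $X$: those configurations admitting a directed $3$-cycle fitting the template of Rule~\ref{rule:smallclique} (namely $|X|=2$ with $G[X\cup\{v\}]$ cyclic and $S=X$, or $|X|=3$ with $v$ having a unique neighbor $a \in X$ and $G[X]$ cyclic and $S=X\setminus\{a\}$) trigger that rule, and all remaining configurations yield a vertex $u \in X$ that is either a degree-$1$ leaf or a source/sink of the induced transitive sub-tournament on $G[X \cup \{v\}]$ with all $G$-neighbors inside $X\cup\{v\}$; such a $u$ has $d^+(u)\neq d^-(u)$ and $G - u$ connected, so Rule~\ref{rule:degree} applies.

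For Property C, with non-adjacent separator $\{x,y\}$ and components $X_1,\dots,X_p$ of $H-\{x,y\}$ (at most one exceptional), I would pick a non-exceptional component $X$ (at least one exists). If $|X|\geq 3$, I pick $w\in\{x,y\}$ such that the other anchor $\{x,y\}\setminus\{w\}$ is adjacent in $H$ to the possibly-exceptional component; then $S=X\cup\{w\}$ is a tournament of size $\ge 4$ whose removal still leaves every remaining component connected through the other anchor, so Rule~\ref{rule:bigclique} applies. If $|X|=2$, each vertex of $X$ has $G$-degree $3$ (its clique partner plus both anchors), hence unbalanced $d^+, d^-$; its removal does not disconnect $G$ because its neighbors remain connected through the anchors, so Rule~\ref{rule:degree} applies. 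If $|X|=1$ with $X=\{u\}$, then either $d^+(u)\neq d^-(u)$ (apply Rule~\ref{rule:degree}) or $\{u,x,y\}$ induces $P_3$ in $H$ and Rule~\ref{rule:triplet} applies either to $\{u,x,y\}$ directly (when $G - \{u,x,y\}$ is connected) or to $\{u,w,u'\}$ for a second non-exceptional singleton $u'$ and a well-chosen anchor $w \in \{x,y\}$.

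The main obstacle will be this last sub-case of Property C: when $|X|=1$ and $d^+(u)=d^-(u)=1$, the $P_3$ chosen for Rule~\ref{rule:triplet} must not isolate the possibly-exceptional component. This forces a case split driven by which of $x,y$ the exceptional component attaches to in $H$, and the systematic way to win is to exploit the facts that all non-exceptional components are adjacent to both anchors and that at most one component is exceptional, so one of the two candidate triples always leaves a surviving anchor through which every other component remains reachable.
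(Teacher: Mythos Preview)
Your approach follows the paper's exactly for Properties A and B. For Property C you organize the cases differently: you fix one non-exceptional component $X$ and split on $|X|$, whereas the paper first splits on whether there are at least two non-exceptional (tournament) components. When there are two, the paper simply picks $x_1\in X_1$, $x_2\in X_2$ and uses the $P_3$ $x_1,x,x_2$ (having assumed the exceptional component is adjacent to $y$), which sidesteps the ``main obstacle'' you describe. When there is exactly one tournament component $X$, the paper uses $S=X\cup\{x\}$ for Rule~\ref{rule:bigclique} if $|X|\ge 3$, Rule~\ref{rule:degree} if $|X|=2$, and the $P_3$ $\{x,a,y\}$ with $G-\{x,a,y\}$ equal to the single exceptional component if $|X|=1$.

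Your plan is essentially correct, but two steps need tightening. First, in your $|X|=1$ case you invoke Rule~\ref{rule:degree} on $u$ without checking that $G-u$ is connected; this can fail when $\{u\}$ is the only non-exceptional component and the exceptional one touches only one anchor (then the other anchor has degree~$1$, so Rule~\ref{rule:degree} applies to \emph{it}, or Rule~\ref{rule:triplet} applies to $\{u,x,y\}$ since $G-\{u,x,y\}$ is just the exceptional component---but you must say so). Second, your fallback ``a second non-exceptional singleton $u'$'' exists only if every non-exceptional component is a singleton; you should specify that you pick $X$ of maximum size among the non-exceptional components, so that reaching the $|X|=1$ branch forces all of them to be singletons. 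With these two fixes your argument goes through; the paper's split-by-count organization simply makes them unnecessary.
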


\begin{proof}
If there is a vertex $x\in X$ such that $G-x$ is connected and $d^+(x)\neq d^-(x)$ (we will call such a case an {\em unbalanced case}),
then Rule \ref{rule:degree} applies. Thus, assume that for each $x\in X$ such that $G-x$ is connected we have $d^+(x)=d^-(x)$.

Consider the case when property A holds. If $|X|\ge 4$, Rule \ref{rule:bigclique} applies on $S=X$.
If $|X|=3$, there has to be exactly one arc between $X$ and $v$ and $G[X]$ is a directed 3-cycle as otherwise we have an unbalanced case.
Let $x\in X$ be the endpoint of this arc in $X$. Then Rule \ref{rule:smallclique} applies with $S=X\backslash \{x\}$. If $|X|=2$, then $G[X\cup \{v\}]$ is a directed 3-cycle (as otherwise we have an unbalanced case) and so Rule \ref{rule:smallclique} applies. We cannot have $|X|=1$ as this is an unbalanced case.

If property B holds, then Rule \ref{rule:triplet} can be applied to the path $P_3$ formed by $a,b,c$ in the underlying graph of $G$.

Consider the case when property C holds. We may assume without loss of generality that the non-tournament component is adjacent to $y$.

Consider the subcase when $G - \{x,y\}$ has two connected components, $X_1$ and $X_2$, that are tournaments. Let $x_1\in X_1$, $x_2\in X_2$ and observe that the subgraph induced by $x_1,x,x_2$ forms a $P_3$ in the underlying graph of $G$
and $G - \{x_1,x,x_2\}$ is connected, and so Rule \ref{rule:triplet} applies.

Now consider the subcase when $G - \{x,y\}$ has only one connected component $X$ that is a tournament.
If $|X|\ge 3$, then $X\cup \{x\}$ is a tournament with least four vertices, and so Rule \ref{rule:bigclique} applies. If $|X|=2$, then let $X=\{a,b\}$. Observe that $a$ is adjacent to three vertices, $b,x,y$, and so we have an unbalanced case to which
Rule \ref{rule:degree} applies. Finally, $X=\{a\}$ is a singleton, then observe that $x,a,y$ form a $P_3$ in the underlying graph of $G$ and $G - \{x,a,y\}$ is connected, and so Rule \ref{rule:triplet} applies.
 \end{proof}

In this paper, we consider the one-vertex undirected graph as 2-connected. A maximal 2-connected induced subgraph of an undirected graph is called a {\em block}. An undirected graph $H$ is called a {\em forest of cliques} if each block of $H$ is a clique. A subgraph $B$ of an oriented graph $G$ is a {\em block} if ${\rm UN}(B)$ is a block in ${\rm UN}(G).$ An oriented graph $G$ is a {\em forest of cliques} if ${\rm UN}(G)$ is a forest of cliques. A connected graph $H$ that is a forest of cliques is known as a {\em tree of cliques}.

\begin{lemma}\label{lem:4props}
Given a connected oriented graph $G$ and integer $k$, we can either show that $(G,k)$ is a {\sc Yes}-instance of {\sc ASAPT}, or find a set $U$ of at most $3k$ vertices such that $G-U$ is a forest of cliques with the following properties:
\begin{enumerate}
 \item Every block in $G - U$ contains at most three vertices;
 \item Every block $X$ in $G - U$ with $|X|=3$ induces a directed 3-cycle in $G$;
 \item Every connected component in $G - U$ has at most one block $X$ with $|X|=2$ vertices;
 \item There is at most one block in $G - U$ with one vertex (i.e., there is at most one isolated vertex in $G - U$).
\end{enumerate}
\end{lemma}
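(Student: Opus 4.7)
The plan is an induction on $|V(G)|$. The base case $|V(G)|=1$ takes $U=\emptyset$; the single vertex is a trivial block (by convention) and satisfies properties 1--4. For the inductive step I use Lemma~\ref{lem:4rules} to identify one of Rules~\ref{rule:smallclique}, \ref{rule:degree}, \ref{rule:bigclique}, \ref{rule:triplet} that applies to $G$, and perform one reduction to obtain $(G',k')$ (with $k'=k$ for the two-way Rule~\ref{rule:smallclique}, and $k'<k$ for each of the one-way rules). If the applied rule is one-way and the resulting $k'$ is negative, then by the Poljak--Turz\'{i}k bound $(G',k')$ is a Yes-instance, and by Lemmas~\ref{lem:degree}, \ref{lem:bigclique}, \ref{lem:triplet} the original $(G,k)$ is a Yes-instance. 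Otherwise I apply the inductive hypothesis to $(G',k')$: either we conclude $(G,k)$ is a Yes-instance, or we obtain $U'\subseteq V(G')$ with $|U'|\le 3k'$ such that $G'-U'$ satisfies properties 1--4.

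For each one-way rule, let $S$ denote the removed vertex set and set $U=U'\cup S$. Then $G-U=G'-U'$ as induced subgraphs, so properties 1--4 transfer directly from the inductive hypothesis. The bound $|U|\le 3k$ reduces to $|S|\le 3(k-k')$, which I will verify case by case: Rule~\ref{rule:degree} has $|S|=1$ and $k-k'\ge 1$; Rule~\ref{rule:bigclique} gives $|S|/(2|S|-4)\le 1$ for even $|S|\ge 4$ and $|S|/(2|S|-7)\le 5/3$ for odd $|S|\ge 5$; Rule~\ref{rule:triplet} is the binding case, with $|S|=3$ and $k-k'=1$ giving the tight factor of $3$.

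The substantive case is the two-way Rule~\ref{rule:smallclique}, for which I set $U=U'$. The rule's hypothesis that $S=\{s_1,s_2\}$ is a full component of $G-x$ means that in $G$ the only edges between $S$ and $V(G)\setminus S$ are the two arcs of the 3-cycle $G[S\cup\{x\}]$ incident to $x$. Hence in $G-U$: if $x\in U'$, then $S$ appears as a new component consisting solely of the 2-vertex block $s_1s_2$; if $x\notin U'$, then $S\cup\{x\}$ forms a new 3-cycle block attached to the tree of cliques $G'-U'$ at $x$. I expect the main obstacle to be the verification that properties 3 and 4 survive this step. For property 3, in the first sub-case the new 2-vertex block occupies its own fresh component (so exactly one such block there), while in the second sub-case no 2-vertex block is created. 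For property 4, the first sub-case adds no isolated vertex (the new component is an edge), and the second sub-case can only convert an isolated $x$ into a non-isolated vertex. Both sub-cases therefore preserve all four properties, completing the induction.
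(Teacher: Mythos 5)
Your proof is correct and takes essentially the same approach as the paper's: the paper also charges at most $3q$ vertices to $U$ per unit decrease of $k$ (with Rule~\ref{rule:triplet} as the tight case) and handles Rule~\ref{rule:smallclique} by exactly your two sub-cases ($x\in U$ yielding a fresh component with a single 2-vertex block, $x\notin U$ yielding a new directed-3-cycle block attached at $x$), merely organizing the induction over the reduction sequence rather than over $|V(G)|$.
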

\begin{proof}

Apply Rules \ref{rule:smallclique}, \ref{rule:degree}, \ref{rule:bigclique}, \ref{rule:triplet} exhaustively,
and let $U$ be the set of vertices removed by Rules \ref{rule:degree}, \ref{rule:bigclique}, and \ref{rule:triplet} (but not Rule \ref{rule:smallclique}).
If we reduce to an instance $(G'',k'')$ with $k'' \le 0$, then by Lemmas \ref{lem:smallclique}, \ref{lem:degree}, \ref{lem:bigclique} and \ref{lem:triplet}, $(G,k)$ is a {\sc Yes}-instance and we may return {\sc Yes}. Now assume that, in the completely reduced instance $(G'',k'')$, $k''>0$. We will prove that $|U|\le 3k$ and $G-U$ satisfies the four properties of the lemma.

Observe that each time $k$ is decreased by a positive integer $q$, at most $3q$ vertices are added to $U$. Thus, $|U|\le 3k$. The rest of our proof is by induction. Observe that, by Lemma \ref{lem:4rules}, for the completely reduced instance $(G'',k'')$ either $G''=\emptyset$ or $G''$ consists of a single vertex. Thus, $G''-U$ satisfies the four properties of the lemma, which forms the basis of our induction.

For the induction step, consider an instance $(G'',k'')$ obtained from the previous instance $(G',k')$ by the application of a reduction rule. By the induction hypothesis, $G''-U$ satisfies the four properties of the lemma. In the application of each of Rules \ref{rule:degree}, \ref{rule:bigclique} and \ref{rule:triplet}, the vertices deleted are added to $U$. Hence $G''-U=G'-U$ and we are done unless
$G''$ is obtained from $G'$ by an application of Rule \ref{rule:smallclique}. Recall that in Rule \ref{rule:smallclique} we delete a set $S$ such that $G[S\cup \{x\}]$ forms a directed 3-cycle. We do not add $S$ to $U$.
If $x\in G''-U$, then in $G'-U$, $S\cup \{x\}$ forms a block of size 3 that is a directed 3-cycle. If $x \notin G''-U$, then in $G'-U$, $S$ forms a new connected component with one block $S$ with $|S|=2$ vertices. Thus, $G'-U$ satisfies the four properties.
 \end{proof}

\begin{theorem}\label{thm:fpt}
There is an algorithm for {\sc ASAPT} of runtime $O((3k)!n^{O(1)}).$
\end{theorem}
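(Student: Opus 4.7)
The plan is to combine Lemma~\ref{lem:4props} with a dynamic programming algorithm that enumerates all orderings of the distinguished set $U$ and, for each, optimally extends to an ordering of all of $V(G)$. First I would run the algorithm of Lemma~\ref{lem:4props}; if it declares $(G,k)$ a {\sc Yes}-instance we are done, and otherwise we obtain a set $U$ with $|U|\le 3k$ such that $G-U$ is a forest of cliques whose blocks all have at most three vertices, whose size-$3$ blocks are directed $3$-cycles, whose connected components each contain at most one size-$2$ block, and which contains at most one isolated vertex.

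Using the correspondence between acyclic spanning subgraphs and linear orderings of $V(G)$ recalled at the start of Section~2, it suffices, for each of the at most $|U|!\le(3k)!$ linear orderings $\sigma$ of $U$, to compute the maximum number of forward arcs over all linear orderings of $V(G)$ whose restriction to $U$ equals $\sigma$, and to accept iff this maximum reaches $\tfrac{m}{2}+\tfrac{n-1}{4}+\tfrac{k}{4}$ for some $\sigma$. Given $\sigma$, any such extension is specified by assigning to each $v\in V(G)\setminus U$ a \emph{slot} $s(v)\in\{0,1,\dots,|U|\}$ (the gap of $\sigma$ into which $v$ falls) together with a relative order within each slot. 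The number of forward arcs then decomposes additively into (i) arcs inside $U$, which are fixed by $\sigma$; (ii) for each $v\in V(G)\setminus U$, a contribution $f(v,s(v))$ from arcs between $v$ and $U$ that depends only on $s(v)$ and is precomputable in polynomial time; and (iii) for each connected component $C$ of $G-U$, the arcs inside $C$. The contributions (iii) from distinct components are independent, since $G-U$ has no arcs between distinct components.

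For each component $C$ I would evaluate the restriction of (ii)$+$(iii) to $V(C)$ by dynamic programming on the block tree of $C$. Rooting the block tree at an arbitrary block, the state at each non-root block $B$ is the slot assigned to the cut-vertex $v$ joining $B$ to its parent. The transition enumerates slots for the at most two remaining vertices of $B$ (at most $(|U|+1)^{2}=O(k^{2})$ choices); for each enumeration it adds $f(w,s(w))$ for each newly slotted vertex $w$, the within-block forward-arc contribution of $B$ under the chosen slots, and the previously computed DP values for the child blocks hanging off each $w$. The root block is handled identically except that all of its vertices are enumerated. Because $|B|\le 3$, the within-block contribution is a constant-size optimization over at most three vertices, in which vertices sharing a slot may be linearly ordered inside that slot to maximize forward arcs; isolated-vertex components and size-$2$ blocks are simpler special cases of the same scheme. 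The DP visits each block $O(1)$ times and does $O(k^{2})$ work per visit, giving $O(k^{2}n)$ time per ordering $\sigma$.

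Summed over the $|U|!\le(3k)!$ orderings of $U$, the total runtime is $O((3k)!\,n^{O(1)})$, as required. The main point needing care is the within-block contribution: we must verify that, within a fixed block $B$, vertices that end up in the same slot can indeed be ordered locally in $B$ without affecting other blocks. This is safe because any two same-slot vertices coming from different blocks of $C$ are non-adjacent in $G$ (blocks intersect only in cut-vertices, and $G-U$ is a forest of cliques), so their within-slot orderings can be chosen independently in each block. This is precisely where the block-size bound furnished by Lemma~\ref{lem:4props}(1) is essential.
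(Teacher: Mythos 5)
Your proposal is correct and follows essentially the same route as the paper: enumerate the at most $(3k)!$ orderings of $U$, then for each ordering run a polynomial-time dynamic program over the block structure of the forest of cliques $G-U$, where the state records into which gap of the ordering of $U$ the cut-vertex of a block is placed (the paper's vectors $(x_0,\dots,x_{t+1})$ and the quantities $\alpha_i,\beta(i,j,h)$ are exactly your slot-indexed DP values and within-block contributions). Your explicit check that same-slot vertices from different blocks are non-adjacent corresponds to the paper's notion of a satisfiable set of arcs, so nothing essential differs.
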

\begin{proof}
We may assume that for a connected oriented graph $G$ we have the second alternative in the proof of Lemma \ref{lem:4props}, i.e., we are also given the set $U$ of at most $3k$ vertices satisfying the four properties of Lemma \ref{lem:4props}.
Consider an algorithm which generates all orderings of $U$, in time $O((3k)!)$ as $|U|\le 3k$. An ordering $u_1,u_2,...,u_{|U|}$ of $U$ means that in the acyclic subgraph of $G$ we are constructing, we keep only arcs of $G[U]$ of the form $u_iu_j$, $i<j$.
For each ordering we perform the following polynomial-time dynamic programming procedure.

For each vertex $x \in G-U$, we define a vector $(x_0,\ldots,x_{t+1})$.
Initially, set $x_i$ to be the number of vertices $u_j \in U$ with an arc from $u_j$ to $x$ if $j \le i$, or an arc from $x$ to $u_j$ if $i<j$.
Note that $x_i$ is the number of arcs between $x$ and $U$ in the acyclic subgraph under the assumption that in the ordering of the vertices of $G$, $x$ is between $u_i$ and $u_{i+1}$.

Given $v,w\in V(G-U)$ and an ordering of $U\cup \{v,w\}$, an arc $vw$ is {\em satisfiable} if there is no $u_p$ such that $v$ is after $u_p$ and $w$ is before $u_p$, for some $p\in [|U|]$. Let $T$ be a set of arcs and let $V(T)$ be the set of end-vertices of $T$. For an ordering of $U\cup V(T)$, $T$ is 
{\em satisfiable} if  each arc is satisfiable, and the set $T$ induces an acyclic subgraph.

If $G-U$ contains a block $S$ that is itself a connected component, consider $S$ and arbitrarily select a vertex $x$ of $S$. Otherwise, find a block $S$ in $G-U$ with only one vertex $x$ adjacent to other vertices in $G - U$ (such a block exists as every block including an end-vertex of a longest path in ${\rm UN}(G)-U$ is such a block). Without loss of generality, assume that $S$ has three vertices $x,y,z$ (the case $|S|=2$ can be considered similarly).

For each $i \in \{0, \dots, t+1\}$, we let $\alpha_i$ be the maximum size of a set of satisfiable arcs between $S$ and $U$ under the restriction that $x$ lies between $u_i$ and $u_{i+1}$. Observe that $\alpha_i = \max_{j,h} (x_i + y_j + z_h + \beta(i,j,h))$, where $\beta(i,j,h)$ is the maximum size of a set of satisfiable arcs in $G[S]$ under the restriction that $x$ lies between $u_i$ and $u_{i+1}$, $y$ lies between $u_j$ and $u_{j+1}$, and $z$ lies between $u_h$ and $u_{h+1}$. Now delete $S \backslash \{x\}$ from $G$, and set $x_i=\alpha_i$ for each $i$.

Continue until each component of $G - U$ consists of a single vertex. Let $x$ be such a single vertex, let $G^*$ be the original graph $G$ (i.e., given as input to our algorithm), and let $X$ be the component of $G^*-U$ containing $x$. By construction, $x_i$ is the maximum number of satisfiable arcs from arcs in $X$ and arcs between $X$ and $U$ in $G^*$, under the assumption $x$ is between $u_i$ and $u_{i+1}$. Since each vertex $x$ represents a separate component, the maximum acyclic subgraph in $G$ has $Q+\sum_{x \in V(G-U)}(\max_i x_i)$ arcs, where $Q$ is the number of arcs $u_iu_j$ in $G[U]$ such that $i<j$.

Since the dynamic programming algorithm runs in time polynomial in $n$, running the algorithm for each permutation of $U$ gives a runtime of $O((3k)!n^{O(1)})$.
\end{proof}

\section{Polynomial Kernel}\label{sec:polykern}

\begin{lemma}\label{lem:incTriangle}
Let $T$ be a directed 3-cycle, with vertices labeled 0 or 1. Then there exists an acyclic subgraph of $T$ with two arcs, such that there is no arc from a vertex labeled 1 to a vertex labeled 0.
\end{lemma}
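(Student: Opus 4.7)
The plan is to observe two elementary facts about a directed 3-cycle and combine them. First, since $T$ is a directed 3-cycle on vertices $v_1, v_2, v_3$ with arcs $v_1v_2, v_2v_3, v_3v_1$, removing any single arc leaves a directed path on three vertices, which is acyclic. So every choice of two arcs out of the three gives an acyclic subgraph; the only question is whether we can choose the two arcs so as to avoid having a $1 \to 0$ arc.

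The main step is to show that at most one of the three arcs of $T$ can go from a 1-labelled vertex to a 0-labelled vertex. I would argue this by contradiction: any two arcs of the 3-cycle share a common endpoint (since the three arcs form a closed cycle). If two arcs were both of type $1 \to 0$, their shared vertex would have to carry both labels, namely $1$ as the tail of one arc and $0$ as the head of the other (or vice versa). This contradicts the fact that each vertex has a single label, so the claim follows.

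Given this, the construction is immediate: if some arc of $T$ goes from a vertex labelled $1$ to a vertex labelled $0$, delete that unique such arc and keep the other two; otherwise delete an arbitrary arc. In either case the resulting subgraph has two arcs, is acyclic by the first observation, and contains no $1 \to 0$ arc. I do not foresee any technical obstacle here — the proof is essentially a short case analysis on the 3-cycle, and the only subtlety is pinning down why the 3-cycle cannot host two $1 \to 0$ arcs, which is precisely the endpoint-sharing argument above.
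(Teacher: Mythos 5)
Your proof is correct and complete. It takes essentially the same elementary route as the paper's, just phrased dually: the paper assumes two vertices share a label and directly exhibits the two arcs to keep (the arc between the like-labelled pair plus the arc entering/leaving the third vertex appropriately), whereas you observe that any two arcs of the directed 3-cycle share a vertex that is the head of one and the tail of the other, so at most one arc can go from a vertex labelled $1$ to a vertex labelled $0$, and you delete that single bad arc.
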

\begin{proof}
Let $V(T)=\{a,b,c\}$ and assume that $a,b$ are labeled 0. Since $T$ is a cycle, either the arc $ac$ or $bc$ exists. This arc, together with the arc between $a$ and $b$, form the required acyclic subgraph. A similar argument holds when two vertices in $T$ are labeled $1$.
 \end{proof}

Recall that $U$ was introduced in Lemma \ref{lem:4props} as the set of vertices removed by Rules \ref{rule:degree}, \ref{rule:bigclique}, and \ref{rule:triplet}. We say that a set $\{u,a,b\}$ of vertices  is a \emph{dangerous triangle} if $u \in U$, $G[a,b]$ is a block in $G - U$, and $G[u,a,b]$ is a directed 3-cycle.

\begin{lemma} \label{lem:degreeU}
 For a vertex $u \in U$, let $t_u$ denote the number of neighbors of $u$ in $G-U$ which do not appear in a dangerous triangle containing $u$. If $t_u \ge 4k$, then we have a {\sc Yes}-instance.
\end{lemma}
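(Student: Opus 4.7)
The plan is to apply Lemma \ref{lem:extend} with $S^* = \{u\} \cup T$, where $T$ is an independent set (in $G-U$) consisting of good neighbors of $u$.

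First I would show that $T$ of size $k+1$ exists. By Lemma \ref{lem:4props}, $G-U$ is a forest of cliques whose blocks have at most three vertices, hence its chromatic number is at most three. The subgraph of $G-U$ induced on the $t_u \ge 4k$ good neighbors of $u$ therefore contains an independent set of size at least $\lceil t_u/3 \rceil \ge k+1$, from which I select $T$.

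Next I would observe that $G[S^*]$ is a star centered at $u$ with $k+1$ arcs: connected, acyclic, and with $k+2$ vertices. Hence ${\rm a}(G[S^*]) = k+1$ and $\avg{G[S^*]} = \tfrac{k+1}{2} + \tfrac{k+1}{4} = \tfrac{3(k+1)}{4}$, so $G[S^*]$ has slack $\tfrac{k+1}{4}$, i.e.\ $k'' = k+1$ in the notation of Lemma \ref{lem:extend}. If $G - S^*$ is connected, then Lemma \ref{lem:extend} applied with $k' = 0$ (from the Poljak--Turz\'ik bound on $G-S^*$) gives
\[
{\rm a}(G) \ge \avg{G} + \frac{0 + (k+1) - 1}{4} + \frac{|d^+(S^*) - d^-(S^*)|}{2} \ge \avg{G} + \frac{k}{4},
\]
so $(G,k)$ is a Yes-instance, as required.

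When $G - S^*$ has components $D_1,\dots,D_p$ with $p \ge 2$, I would iterate Lemma \ref{lem:extend}, starting from the connected $G[S^*]$ and absorbing each $D_i$ in turn. At each step the current subgraph remains connected, since every $D_i$ has at least one arc to $S^*$ by the connectivity of $G$, and each $D_i$ is connected by definition. The cumulative slack is $\tfrac{(k+1) - p}{4} + \tfrac{1}{2}\sum_i |d^+(D_i, S^*) - d^-(D_i, S^*)|$.

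The main obstacle will be ensuring this total is at least $\tfrac{k}{4}$ in the disconnected case. The strategy is to exploit the flexibility afforded by $t_u \ge 4k$, which permits an independent set $T$ substantially larger than $k+1$ (up to $\lceil t_u/3 \rceil$), furnishing extra slack to absorb the $-p/4$ cost of disconnection; alternatively, one may choose $T$ so as to avoid cut vertices of $G-u$ and thereby control $p$, or appeal to parity bonuses from the terms $|d^+(D_i, S^*) - d^-(D_i, S^*)|$ (which contribute at least $1$ when $|E(D_i, S^*)|$ is odd, as is often the case when a $D_i$ is a singleton vertex adjacent to $u$ and to the removed neighbors in $T$). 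Combining these tools should suffice to push the bound through.
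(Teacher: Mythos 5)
The connected case of your argument is fine, but the heart of the lemma is the disconnected case, and there the proposal has a genuine gap that the suggested remedies cannot close. The star $G[S^*]$ buys you only $\tfrac14$ of slack per vertex of $T$ (its excess over $\avg{G[S^*]}$ is $\tfrac{|T|}{4}$), while each application of Lemma \ref{lem:extend} used to re-attach a component of $G-S^*$ costs $\tfrac14$. Since you delete the vertices of $T$ from $G-U$, you can shatter the forest of cliques: if every good neighbour of $u$ is a cut vertex of its component (e.g.\ a component of $G-U$ that is a chain of $3$-vertex blocks, with $u$ adjacent exactly to the shared vertices of consecutive blocks --- these are not in dangerous triangles, so they all count towards $t_u$), then every vertex you place in $T$ creates an additional component, so enlarging $T$ gains nothing: the net is at best $0$ per vertex, and on top of that you still pay up to $\tfrac{3k}{4}$ for the components of $G-S^*$ that contain vertices of $U\setminus\{u\}$, so the final bound can be negative. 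Choosing $T$ to avoid cut vertices is impossible in such examples, and the parity/imbalance bonuses of Lemma \ref{lem:extend} are not guaranteed (the arcs between a component and $S^*$ can be perfectly balanced; none of the reduction rules forbids this). So ``combining these tools'' does not suffice; a genuinely different accounting is needed.

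The paper's proof avoids this trap by never deleting vertices of $G-U$. It takes $S$ to be the union of the whole components of $G-U$ adjacent to $u$, keeps \emph{all} arcs between $u$ and $S$ (not just arcs to an independent set), and uses the $0/1$-labelling together with Lemma \ref{lem:incTriangle} to build an acyclic subgraph inside each component that is compatible with all these arcs; dangerous triangles are the one place where an arc inside a block must be sacrificed, which is why they are excluded from $t_u$. This yields roughly $\tfrac12$ per arc incident to $u$, i.e.\ slack at least $\tfrac{t_u(C)}{4}$ per component $C$, and the only disconnection cost comes from $G-S-u$, which has at most $3k$ components because each must contain a vertex of $U$; repeated use of Lemma \ref{lem:extend} then gives ${\rm a}(G)\ge\avg{G}+\tfrac{t_u-3k}{4}\ge\avg{G}+\tfrac{k}{4}$. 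Your approach trades the labelling argument for an independent set, but that trade halves the per-arc gain and simultaneously inflates the component count, which is exactly why it cannot be pushed through.
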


\begin{proof}
Let $S$ denote the subgraph of $G-U$ consisting of all components $C$ of $G-U$ which have a neighbor of $u$. For each component $C$ of $S$, let $t_u(C)$ denote the number of neighbors of $u$ in $C$ which do not appear in a dangerous triangle containing $u$.

For each vertex $x\in G-U$, label it 0 if there exists an arc from $x$ to $u$, or 1 if there is an arc from $u$ to $x$. Recall from Lemma \ref{lem:4props} each connected component in $G-U$ has at most one block $X=\{x,y\}$ with $|X|=2$. If one vertex $x$ is labeled, assign $y$ the same label. Finally, assign label 1 to any remaining unlabeled verticies in $G-U$.

We will now construct an acyclic subgraph $H'$ of $G-U$ such that there is no arc from a vertex labeled 1 to a vertex labeled 0. We then extend this to an acyclic subgraph $H$ containing all the arcs between $u$ and $S$. 

Consider each block $X$ in $G-U$. If $|X|=3$, and $X$ is a directed 3-cycle, then by Lemma \ref{lem:incTriangle} there is an acyclic subgraph of $X$ with two arcs. Add this to $H'$.
Now suppose $|X|=2$, and let $a,b$ be the vertices of $X$ with an arc from $a$ to $b$. If $G[X \cup \{u\}]$ is a dangerous triangle, then $a$ is labeled 1 and $b$ is labeled 0. In this case we do not include the arc $ab$ in $H'$. However, $H$ will include the two arcs between $X$ and $u$, which do not count towards $t_u(C)$.
If $G[X \cup \{u\}]$ is not a dangerous triangle, then we include the arc $ab$ in the acyclic subgraph $H'$. Finally, let $H$ be the acyclic subgraph formed by adding all arcs between $u$ and $S$ to $H'$.

Observe that for each component $C$ of $S$, if $G[C \cup \{u\}]$ contains no dangerous triangle then $H$ contains at least $\avg{C}$ arcs in $G[C]$ (by the construction of $H'$) and $t_u(C)$ arcs between $C$ and $u$ (since all arcs between $S$ and $u$ are in $H$), and $\avg{C \cup \{u\}}:= \avg{G[C \cup \{u\}]} = \avg{C} + \frac{t_u(C)}{2} + \frac{1}{4}$.
So $H$ contains at least $\avg{C \cup \{u\}} + \frac{t_u(C)}{2} - \frac{1}{4}$ arcs. Since $G[C \cup \{u\}]$ contains no dangerous triangle but $C$ is adjacent to $u$, $t_u(C) \ge 1$, and so $H$ contains at least $\avg{C \cup \{u\}} + \frac{t_u(C)}{4}$ arcs.

If $G[C \cup \{u\}]$ contains a dangerous triangle then $H$ contains at least $\avg{C}-\frac{3}{4}$ arcs in $G[C]$ (this can be seen by contracting the arc in $C$ appearing in the dangerous triangle, and observing that in the resulting component $C'$, $H$ has at least $\avg{C'}$ arcs) and $t_u(C) + 2$ arcs between $C$ and $u$, and $\avg{C \cup \{u\}} = \avg{C} + \frac{t_u(C)+2}{2} + \frac{1}{4}$.
Thus, $H$ contains at least $\avg{C \cup \{u\}} + \frac{t_u(C)}{2}$ arcs.

Let $C_1, C_2, \dots, C_{q}$ be the components of $S$. Observe that $\avg{S \cup \{u\}} = \sum_{i=1}^{q}\avg{C_i \cup \{u\}}$. Then by combining the acyclic subgraphs for each $G[C_i\cup \{u\}]$, we have that $a(G[S \cup \{u\}]) \ge \sum_{i=1}^{q}(\avg{C_i \cup \{u\}} + \frac{t_u(C_i)}{4}) = \avg{S \cup \{u\}} + \frac{t_u}{4}$.


%

Finally, observe $G-S-u$ has at most $3k$ component, since each component must contain a vertex of $U$. By repeated application of Lemma \ref{lem:extend}, this implies there is an acyclic subgraph of $G$ with at least $\avg{G}+\frac{t_u-3k}{4}$ arcs.
Hence, if $t_u \ge 4k$, we have a {\sc Yes}-instance.
 \end{proof}

Using the above lemma and the fact that $|U| \le 3k$ (by Lemma \ref{lem:4props}), we have that unless $(G,k)$ is a {\sc Yes}-instance, there are at most $12k^2$ vertices in $G - U$ that are adjacent to a vertex in $U$ and do not appear in a dangerous triangle with that vertex.

\begin{lemma}\label{lem:danger}
 Let $s$ be the number of components in $G - U$ in which every neighbor $x$ of a vertex $u\in U$ appears in a dangerous triangle together with $u$.
 If $s\ge k$, we have a {\sc Yes}-instance.
\end{lemma}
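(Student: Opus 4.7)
The plan is to construct an acyclic subgraph of $G$ with at least $\avg{G}+k/4$ arcs by exploiting the rigid structure that dangerous components must have once the reduction rules have been applied exhaustively.

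The heart of the argument is the following structural claim: in the reduced instance, every dangerous component $C$ coincides with its unique $2$-block $\{a,b\}$, and the number $p$ of vertices $u\in U$ forming a dangerous triangle with $\{a,b\}$ satisfies $p\ge 2$. To justify this, observe first that any $u$-neighbor of a vertex of $C$ must lie in a $2$-block of $G-U$; since $C$ has only one $2$-block, namely $\{a,b\}$ (Lemma~\ref{lem:4props}(3)), every vertex of $C\setminus\{a,b\}$ has no neighbor in $U$. Suppose $C$ contained some triangle block $\{x,y,z\}$ appearing as a leaf of the block tree of $C$, with $x$ the cut vertex. Then $y,z\notin\{a,b\}$ and their only neighbors in $G$ are $x,y,z$ themselves, so $\{y,z\}$ is a component of $G-x$; combined with Lemma~\ref{lem:4props}(2) that $G[\{x,y,z\}]$ is a directed $3$-cycle, Rule~\ref{rule:smallclique} would apply. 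Likewise, if $C=\{a,b\}$ with $p=1$, then $\{a,b\}$ is a component of $G-u$ for the unique dangerous $u$, and $G[\{u,a,b\}]$ is a directed $3$-cycle, giving another instance of Rule~\ref{rule:smallclique}. Since all reduction rules have been applied exhaustively (Lemma~\ref{lem:4props}), both configurations are excluded, and the claim follows.

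With this claim, the construction is simple. Enumerate the dangerous components as $C_i=\{a_i,b_i\}$ for $i=1,\dots,s$, with internal arc $a_i\to b_i$ (so each dangerous triangle is an oriented $3$-cycle $a_i\to b_i\to u\to a_i$), and set $H=G-\bigcup_i V(C_i)$. Pick any ordering of $V(H)$ whose forward arcs form an acyclic subgraph of $H$ of size at least $\avg{H}$ (applying the Poljak-Turz\'ik bound componentwise). Extend this ordering to $V(G)$ by prepending every $b_i$ and appending every $a_i$. For each dangerous triangle $\{u,a_i,b_i\}$ the boundary arcs $b_i\to u$ and $u\to a_i$ both point forward, while only the internal arc $a_i\to b_i$ is lost. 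Thus the acyclic subgraph produced contains at least $\avg{H}+2P$ arcs, where $P=\sum_i p_i$.

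A short accounting, using $n_G=n_H+2s$ and $m_G=m_H+s+2P$, shows $\avg{G}=\avg{H}+s+P+(c_H-1)/4$, where $c_H$ is the number of components of $H$; hence our acyclic subgraph exceeds $\avg{G}$ by at least $(P-s)+(1-c_H)/4$. Each component of $H$ contains a vertex of $U$, because every non-dangerous component of $G-U$ is adjacent to $U$ in $G$ (by connectedness of $G$, since no arc joins two distinct components of $G-U$) and this adjacency survives the removal of $\bigcup_i V(C_i)$; hence $c_H\le|U|\le 3k$. Together with $P\ge 2s$ (from the claim) and the hypothesis $s\ge k$, this yields a surplus of at least $(k+1)/4\ge k/4$. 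The crux of the proof lies in the structural claim; the subsequent construction and accounting are essentially routine.
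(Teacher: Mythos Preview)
Your proof is correct and follows essentially the same approach as the paper's: both establish that each dangerous component is exactly its unique $2$-block $\{a_i,b_i\}$ with at least two dangerous triangles (using that $G$ is reduced by Rule~\ref{rule:smallclique}), then build an acyclic subgraph from a Poljak--Turz\'{i}k ordering on $G-\bigcup_i\{a_i,b_i\}$ augmented by all boundary arcs of the dangerous triangles, and finish with the same arithmetic (using $c_H\le|U|\le 3k$ and $\sum p_i\ge 2s\ge 2k$). Your write-up supplies a bit more detail on the structural claim (the leaf-block argument for why no $3$-block can survive), but the underlying idea and computation are the same.
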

\begin{proof}
By Lemma \ref{lem:4props} such a component $C_i$ contains at most one block of size 2. Since only blocks of size 2 can have vertices in dangerous triangles, only the vertices from this block in $C_i$ may be adjacent to a vertex in $U$. But since $G$ is reduced by Rule \ref{rule:smallclique}, component $C_i$ must consist of only this block. Moreover, this block must appear in at least two dangerous triangles. Let $a_i,b_i$ be the vertices of $C_i$, $i=1,\ldots ,s$ and let $C=\cup_{i=1}^s\{a_i,b_i\}$.
Let $a_ib_i$ be an arc for each $i=1,\ldots ,s$ and note that every arc of $G$ containing $a_i$ ($b_i$, respectively) is either $a_ib_i$ or is from $U$ to $a_i$ (from $b_i$ to $U$, respectively). Let $\delta_i$ be the number of dangerous triangles containing $a_i$ and $b_i$; note that $\delta_i\ge 2$.

By (\ref{gammab}), $G-C$ has an acyclic subgraph $H$ with at least $\gamma(G-C)$ arcs. Observe that we can add to $H$ all arcs entering each $a_i$ and leaving each $b_i$, $i=1,\ldots ,s$, and obtain an acyclic subgraph $H^*$ of $G$. We will prove that $H^*$ contains enough arcs to show that $(G,k)$ is a {\sc Yes}-instance. Observe that $G-C$ has at most $|U|\le 3k$ components and $G[C]$ has $2s$ vertices and $2\sum_{i=1}^s\delta_i+s$ arcs, and recall that each $\delta_i\ge 2$.
Thus, the number of arcs in $H^*$ is at least 
\begin{eqnarray*}
 \gamma(G-C)+2\sum_{i=1}^s\delta_i & \ge & 
\frac{m-2\sum_{i=1}^s\delta_i-s}{2}+\frac{n-2s-3k}{4} +
2\sum_{i=1}^s\delta_i \\
 & \ge &
 \gamma(G) + \sum_{i=1}^s\delta_i - s - \frac{3k}{4}
\ge \gamma(G) +\frac{k}{4}.
\end{eqnarray*}
\end{proof}

%
%

Let $H$ be an undirected forest of cliques, where each block contains at most three vertices. A block $B$ of $H$ is called a {\em leaf-block} if there is at most one vertex of $B$ belonging to another block of $H.$ We denote the set of leaf-blocks of $H$ by ${\cal L}(H)$. A block $B$ of $H$ is called a {\em path-block} if there is another block $B'$ of $H$ such that $B$ and $B'$ have a common vertex $c$ which belongs only to these two blocks, at most one vertex of $B$ belongs to a block other than $B'$, and at most one vertex of $B'$ belongs to a block other than $B.$ We denote the set of path-blocks which are not leaf-blocks by ${\cal P}(H)$.

\begin{lemma}\label{lem:graphbounded}
For a forest of cliques $H$, with each block of size at most three, if $l=|{\cal L}(H)|$ and $p=|{\cal P}(H)|$ then $|V(H)|\le 8l+2p$.
\end{lemma}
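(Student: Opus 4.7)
First I would establish that for any forest of cliques $H$ with $c'$ connected components and blocks $B_1,\dots,B_b$,
$$|V(H)| = c' + \sum_{i=1}^{b}(|B_i|-1),$$
proved by building each component up block-by-block (each added block shares exactly one cut vertex with what is already present). Since $|B_i|\le 3$, this gives $|V(H)|\le c'+2b = c' + 2l + 2p + 2o$, where $o$ counts blocks that are neither leaf-blocks nor path-blocks. So it suffices to prove $c' + 2o \le 6l$.

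I would split $o = o_2 + o_3$ according as $\mathrm{other}(B)$, the number of vertices of $B$ lying in another block, equals $2$ or $3$. For $o_3$ I would apply the standard tree identity to the block-cut tree $T$ of $H$: block-nodes have degree $\mathrm{other}(B)$ and cut-vertex nodes have degree at least $2$, so $\sum_v(\deg_T(v)-2)=-2c'$ yields
$$o_3 + D \;=\; l_0 + l - 2c' \;\le\; l-c',$$
where $D=\sum_{v\text{ cut},\,\deg_T(v)\ge 3}(\deg_T(v)-2)$ and $l_0\le c'$ is the number of isolated-block components; in particular $o_3\le l-c'$.

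The main technical step is bounding $o_2$. A degree-$2$ block $B$ fails to be a path-block exactly when both of its neighbouring cut vertices $c$ in $T$ are \emph{bad}: either $\deg_T(c)\ge 3$ (``big''), or $\deg_T(c)=2$ but the unique other block containing $c$ lies in $o_3$. I would then double-count bad (block, cut-vertex) incidences. From the $o_2$-side we get exactly $2o_2$; from the bad-cut-vertex side, the big-cut-vertex contribution is at most $\sum_{v\text{ big}}\deg_T(v) = D + 2C_{\mathrm{big}} \le 3D$ (since every big cut vertex contributes at least $1$ to $D$, so $C_{\mathrm{big}}\le D$), and the ``small-leading-to-$o_3$'' contribution is at most $3o_3$ because each $o_3$ block contains only $3$ cut vertices. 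Hence $2o_2 \le 3(D+o_3) \le 3(l-c')$, so $o_2 \le \tfrac{3}{2}(l-c')$.

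Combining, $c' + 2o \le c' + 5(l-c') = 5l - 4c' \le 6l$, whence $|V(H)| \le c' + 2b \le 8l + 2p$, as required. The main obstacle is the bound on $o_2$: such a block is locally indistinguishable from a path-block in $T$ (both have degree $2$), and only non-local structural badness at both of its neighbours prevents it from qualifying—the incidence count above is what converts this purely local failure condition into a global upper bound.
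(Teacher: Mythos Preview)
Your argument is correct. The characterisation of when a degree-$2$ block fails to be a path-block is accurate, the tree identity $\sum_v(\deg_T(v)-2)=-2c'$ indeed yields $o_3+D=l+l_0-2c'\le l-c'$ (using that every component contributes at least as much to $l+l_0$ as the $2$ it contributes to $2c'$), and the incidence bound $2o_2\le 3D+3o_3$ goes through as written. The final arithmetic $c'+2o\le 5l-4c'\le 6l$ then closes the proof.

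Your route is genuinely different from the paper's. The paper argues by a short induction on the number of blocks: delete a leaf-block $B$ (losing at most two vertices) and observe that either a path-block of $H$ becomes a leaf-block of $H'$ (so $l'=l$, $p'\le p-1$), or $l'=l-1$ and at most three blocks become new path-blocks (so $p'\le p+3$); in either case the inductive bound $8l'+2p'$ plus $2$ stays within $8l+2p$. This is a local ``peel off a leaf'' argument with a brief case analysis. Your proof is instead a global double-count on the block--cut tree, trading the induction for an explicit structural inequality. What your approach buys is a slightly sharper implicit bound ($|V(H)|\le 7l+2p-4c'$) and a clear picture of where the slack sits (in $o_3+D$ versus $l-c'$); what the paper's approach buys is brevity and no need to analyse the ``bad cut-vertex'' condition for $o_2$.
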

\begin{proof}
We prove the claim by induction on the number of blocks in $H$. The case when $H$ has only one block is trivial.
Thus, we may assume that $H$ has at least two blocks and $H$ is connected. Let $B$ be a leaf-block of $H$, and
obtain subgraph $H'$ by deleting the vertices of $B$ not belonging to another block. Note that $|V(H)|\le |V(H')|+2$.

Assume that $H'$ has a leaf-block $B'$ which is not a leaf-block in $H.$ Observe that $B'\in {\cal P}(H)$ and by induction
$|V(H)|\le 2+ 8l+2(p-1) \le 8l+2p$.

Now assume that $|{\cal L}(H')|=l-1$. Observe that removal of $B$ from $H$ may lead to a neighbour of $B$, $B'$, becoming a path-block in $H'$, together with at most two blocks neighbouring $B'$. Thus, at most three blocks may become path-blocks in $H'$. By the induction hypothesis, $|V(H')| \le 8(l-1)+2(p+3)$. Hence, $|V(H)|\le 8(l-1)+2(p+3)+2 \le 8l+2p$.
\end{proof}



\begin{theorem}
 {\sc Acyclic Subgraph above Poljak-Turz\'{i}k Bound (ASAPT)} has a kernel with $O(k^2)$ vertices and $O(k^2)$ arcs.
\end{theorem}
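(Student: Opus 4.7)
The plan is to assume $(G,k)$ is a \textsc{No}-instance (else there is nothing to kernelize) and show $|V(G)|,|E(G)|=O(k^2)$. Apply Lemma \ref{lem:4props} to extract a set $U$ with $|U|\le 3k$ such that $G-U$ is a forest of cliques with blocks of size at most three obeying the four listed properties. Let $V_1$ denote the set of vertices of $G-U$ adjacent to some $u\in U$ not via a dangerous triangle; by Lemma \ref{lem:degreeU} we have $|V_1|\le 12k^2$, and by Lemma \ref{lem:danger} at most $k-1$ components of $G-U$ consist entirely of an all-dangerous $2$-block. Since every other component of $G-U$ meets $V_1$, the total number of components is $O(k^2)$.

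To bound $|V(G-U)|$ I would invoke Lemma \ref{lem:graphbounded} after bounding the leaf- and path-blocks of $G-U$. A leaf-block of size three with cut vertex $x$ and non-cut vertices $y,z$ must contribute a distinct element to $V_1$: otherwise $\{y,z\}$ would be a component of $G-x$ with $G[\{x,y,z\}]$ a directed $3$-cycle, triggering Rule \ref{rule:smallclique}; and a non-cut vertex of a $3$-block lies in no $2$-block and hence in no dangerous triangle. Size-$2$ leaf-blocks are at most one per component and there is at most one singleton block overall, so the number of leaf-blocks is $O(k^2)$. For size-$3$ path-blocks, inspect a maximal chain $B_1,\ldots,B_q$ of such blocks along the block-tree: if in the interleaved sequence of non-cut and shared cut vertices, three consecutive entries (the non-cut vertices of adjacent $B_i,B_{i+1}$ and their shared cut vertex) all lacked $U$-neighbours, Rule \ref{rule:bridgeTriangles} would apply. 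Hence every such triple contains a $V_1$-vertex, bounding $q$ linearly in the $V_1$-content of the chain; summing over the $O(k^2)$ chains yields $O(k^2)$ size-$3$ path-blocks, and size-$2$ path-blocks are $O(k^2)$ trivially (one per component). Lemma \ref{lem:graphbounded} then gives $|V(G-U)|\le 8|\mathcal{L}|+2|\mathcal{P}|=O(k^2)$, so $|V(G)|=O(k^2)$.

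For the arc count, the arcs inside $G[U]$ number $\binom{|U|}{2}=O(k^2)$ and the arcs inside $G-U$ are at most three per block and hence $O(k^2)$ in total. The non-dangerous arcs between $U$ and $G-U$ contribute $\sum_{u\in U}t_u<4k|U|=12k^2$, so it remains to bound the total number of dangerous triangles. A $2$-block $\{a,b\}$ carrying $d$ dangerous triangles yields a ``fan'' subgraph on $d+2$ vertices and $2d+1$ arcs whose maximum acyclic subgraph (obtained by deleting the single arc between $a$ and $b$) has $2d$ arcs, a local excess of $\frac{3(d-1)}{4}$ over its Poljak-Turz\'{i}k bound. Iterating Lemma \ref{lem:extend} across the $O(k^2)$ $2$-blocks, in the spirit of the proof of Lemma \ref{lem:danger}, forces $\sum_i d_i=O(k^2)$ in a \textsc{No}-instance: otherwise the accumulated gain over $\gamma(G)$ would exceed $k/4$.

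The main obstacle is this last step. Distinct $2$-blocks may share vertices of $U$ in their dangerous triangles, so one cannot simply take each fan as an independent set $S$ in Lemma \ref{lem:extend}; the sets must be chosen carefully so that the cumulative $-1/4$ losses and the $|d^+(S)-d^-(S)|/2$ correction terms do not wipe out the local excesses. A clean resolution is to process the $2$-blocks one at a time, at each step absorbing the fan structure together with a suitable ``buffer'' of $G-U$ to keep $G-S$ connected and the asymmetry term in Lemma \ref{lem:extend} under control, so that the $\frac{3(d_i-1)}{4}$ contributions accumulate to the desired global excess.
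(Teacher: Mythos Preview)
Your plan is essentially the paper's argument, but you skip one step that the paper performs explicitly and that your own reasoning relies on. The paper \emph{first} exhaustively applies the two two-way rules, Rule~\ref{rule:smallclique} and Rule~\ref{rule:bridgeTriangles}, to obtain an equivalent instance $(G,k)$ irreducible under both, and only \emph{then} runs the one-way rules to extract $U$ (as in Lemma~\ref{lem:4props}). You jump straight to Lemma~\ref{lem:4props}; but your leaf-block argument (``otherwise Rule~\ref{rule:smallclique} would trigger'') and your path-block argument (``Rule~\ref{rule:bridgeTriangles} would apply'') are contradictions only if $G$ is already known to be reduced by those rules. Lemma~\ref{lem:4props} never touches Rule~\ref{rule:bridgeTriangles}, and although it applies Rule~\ref{rule:smallclique} internally, the properties it outputs concern the \emph{original} $G$, which need not be Rule~\ref{rule:smallclique}-irreducible. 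The fix is exactly what the paper does---reduce by Rules~\ref{rule:smallclique} and~\ref{rule:bridgeTriangles} before anything else---but it has to be said.

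For the arc count, you are working considerably harder than the paper. The paper does not set up any local-excess argument over fans; it simply asserts that the arcs between $U$ and $G-U$ number at most $(4k+2k)\,|U|$, pairing the per-vertex bound $t_u\le 4k$ from Lemma~\ref{lem:degreeU} with the at most $2k$ vertices contributed by the $\le k$ all-dangerous $2$-blocks from Lemma~\ref{lem:danger}. Your worry that distinct $2$-blocks may share vertices of $U$ across their dangerous triangles, so that the fan excesses cannot be summed independently via Lemma~\ref{lem:extend}, is a legitimate technical concern, and your ``buffer'' resolution is only a gesture at a proof. The paper sidesteps this entirely by bounding the cross-arcs directly per vertex of $U$ rather than via a global count of dangerous triangles; its justification of the $2k$ term is terse, but the intended route is clearly more economical than the iterated-excess scheme you outline.
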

\begin{proof}
Consider an instance of $(G^*,k)$ of ASAPT. Apply Rules \ref{rule:smallclique} and \ref{rule:bridgeTriangles} to obtain an instance $(G,k)$ reduced by Rules \ref{rule:smallclique} and \ref{rule:bridgeTriangles}.
 
Assume that $(G,k)$ is reduced by Rules \ref{rule:smallclique} and \ref{rule:bridgeTriangles} and it is a {\sc No}-instance. 

Now we will apply all reduction rules but Rule \ref{rule:bridgeTriangles}. As a result, we will obtain the set $U$ of vertices deleted in Rules \ref{rule:degree}, \ref{rule:bigclique}, and \ref{rule:triplet}. By Lemma \ref{lem:4props}, $|U|\le 3k$ and, by Lemma \ref{lem:degreeU}, each $u \in U$ has at most $4k$ neighbors that do not appear in a dangerous triangle with $u$. By Lemma \ref{lem:danger}, there are at most $2k$ vertices in $G - U$ that appear in a dangerous triangle with every neighbor in $U$ (there are at most $k$ components, and each component has two vertices). Hence the number of neighbors in $G-U$ of vertices of $U$ is at most
$4k|U| + 2k = 12k^2 + 2k$.

Now we will adopt the terminology and notation of Lemma \ref{lem:graphbounded} (we extend it from ${\rm UN}(G-U)$ to $G-U$ as we have done earlier). Consider a leaf-block $B$. Since $G$ is reduced by Rules \ref{rule:smallclique} and \ref{rule:degree},
$B$ must contain a vertex $v$ adjacent to $U$, and furthermore, $v$ is not contained in any other block. Hence, $|{\cal L}(G-U)|\le 12k^2 + 2k$.

Next, we observe that Rule \ref{rule:bridgeTriangles} implies there do not exist two adjacent 3-vertex blocks $B = \{a,b,c\}$, $B'=\{c,d,e\}$ such that
only $a$ and $e$ belong to other blocks, unless one of $b,c,d$ has a neighbor in $U$. Observe that each connected component of $G-U$ contains at most one 2-vertex block, so there are at most $12k^2 + 2k$ 2-vertex path blocks. Each 2-vertex path block is adjacent to at most two 3-vertex path blocks. Hence, $|{\cal P}(G-U)|\le 6(12k^2 + 2k)$.
So, by Lemma \ref{lem:graphbounded},  $|V(G-U)|\le 8(12k^2 + 2k)+2\cdot 6(12k^2 + 2k)=O(k^2)$, and so $|V(G)|\le O(k^2)+3k=O(k^2)$.

Finally, we show $G$ has $O(k^2)$ arcs. There are at most $|U|^2$ arcs in $U$. Between $G-U$ and $U$ there are at most $(4k+2k)|U|$ arcs. Finally, observe that $G-U$ has at most $|V(G-U)|\le 20(12k^2 + 2k)$ blocks, and each block contains at most $3$ arcs. Hence, $|A(G)|\le |U|^2+60(12k^2 + 2k)\le 9k^2+60(12k^2 + 2k)=O(k^2)$.

Thus, either $(G,k)$ is a {\sc Yes}-instance, or $(G,k)$ forms a kernel with $O(k^2)$ vertices and $O(k^2)$ arcs.
\end{proof}

\section{Discussion}\label{sec:openproblems}

After this paper was submitted to FSTTCS 2012, we learned that Mnich {\em et al.} \cite{MPSS}  combined modified approaches of \cite{CroJonMni} and \cite{PoljakTurzik86} to
prove that a number of graph problems parameterized
above tight lower bounds are fixed-parameter tractable. In particular,
they proved that ASAPT is fixed-parameter tractable. However,
\cite{MPSS} did not obtain any results on polynomial kernels.

The algorithm of Theorem \ref{thm:fpt} has runtime $2^{O(k\log k)}n^{O(1)}$. It would be interesting to design an algorithm of runtime $2^{O(k)}n^{O(1)}$ or to prove that such an algorithm does not exist, subject to a certain complexity hypothesis, as in \cite{LokMarSau}. It would also be interesting to see whether {\sc ASAPT} admits a kernel with $O(k)$ vertices.

\end{document}